\documentclass[reprint,amsmath,amssymb,aps,prl,superscriptaddress, longbibliography]{revtex4-2}
\bibliographystyle{apsrev4-1}


\usepackage{hyperref,url} \usepackage{amsmath,bm} \usepackage{mathtools}
\usepackage{amsfonts}
\usepackage{cleveref}
\hypersetup{breaklinks,colorlinks=true,linkcolor=blue,citecolor=blue,urlcolor=blue,filecolor=blue}
\usepackage[dvipsnames]{xcolor} \usepackage{graphicx} \usepackage{booktabs}
\usepackage{physics}

\usepackage[english]{babel}
\makeatletter
\let\ORIbbl@fixname\bbl@fixname
\def\bbl@fixname#1{%
  \@ifundefined{languagealias@\expandafter\string#1}
    {\ORIbbl@fixname#1}
    {\edef\languagename{\@nameuse{languagealias@#1}}}%
}
\newcommand{\definelanguagealias}[2]{%
  \@namedef{languagealias@#1}{#2}%
}
\makeatother

\definelanguagealias{en}{english}
\usepackage{soul}

\usepackage{amsthm} 

\usepackage{algorithmicx,algorithm}
\usepackage[noend]{algpseudocode}
\usepackage{tabularx}

\definecolor{darkmagenta}{RGB}{150,36,153}
\definecolor{darkindigo}{RGB}{80,44,209}
\definecolor{darkred}{RGB}{180,24,110}
\definecolor{themeblue}{RGB}{26,76,200}
\definecolor{themepink}{RGB}{227,45,145}
\definecolor{thememagenta}{RGB}{200,48,204}
\definecolor{lightindigo}{RGB}{137,113,225}
\definecolor{lightblue}{RGB}{0,176,240}

\newcommand{\eq}[1]{Eq.~\hyperref[eq:#1]{(\ref*{eq:#1})}}
\newcommand{\citen}[1]{Ref.~\citenum{#1}}
\renewcommand{\sec}[1]{\hyperref[sec:#1]{Section~\ref*{sec:#1}}}
\newcommand{\app}[2]{\hyperref[app:#1]{Appendix~\ref*{app:#1}}}
\newcommand{\tab}[1]{\hyperref[tab:#1]{Table~\ref*{tab:#1}}}
\newcommand{\fig}[1]{\hyperref[fig:#1]{Figure~\ref*{fig:#1}}}
\newcommand{\thm}[1]{\hyperref[thm:#1]{Theorem~\ref*{thm:#1}}}
\newcommand{\cor}[1]{\hyperref[cor:#1]{Corollary~\ref*{cor:#1}}}
\newcommand{\defi}[1]{\hyperref[def:#1]{Definition~\ref*{def:#1}}}

\newtheorem{theorem}{Theorem} 
\newtheorem{lemma}[theorem]{Lemma}
\newtheorem{corollary}[theorem]{Corollary}
\theoremstyle{definition}
\newtheorem{definition}{Definition}

\newcommand{\eps}{\varepsilon}
\newcommand{\E}{\mathbb{E}}

\newcommand{\idmat}[0]{\mathbb{I}}

\newcommand{\bigot}[1]{\widetilde{\mathcal{O}}( #1)}

\newcommand{\bigo}[1]{\mathcal{O}(#1)}

\newcommand{\bigomega}[1]{\Omega(#1)}
\newcommand{\defeq}[0]{\coloneqq}
\newcommand{\controlled}[1]{\textsc{c-}#1}

\newcommand{\Google}{\affiliation{Google Quantum AI, Mountain View, CA, USA}}

\begin{document}
\nocite{apsrev41control}

\author{William J. Huggins}
\email{corresponding author: whuggins@google.com}
\Google

\author{Kianna Wan}
\Google
\affiliation{Stanford Institute for Theoretical Physics, Stanford University, Stanford, CA 94305, USA}

\author{Jarrod McClean}
\Google

\author{Thomas E.~O'Brien}
\Google

\author{Nathan Wiebe}
\email{corresponding author: nawiebe@cs.toronto.edu}
\affiliation{University of Toronto, Toronto, ON, CA}
\affiliation{Pacific Northwest National Laboratory, Richland, WA, USA}

\author{Ryan Babbush}
\email{corresponding author: babbush@google.com}
\Google

\title{Nearly Optimal Quantum Algorithm for Estimating Multiple Expectation Values}

\begin{abstract}
  Many quantum algorithms involve the evaluation of expectation values.
  Optimal strategies for estimating a single expectation value are known,
  requiring a number of state preparations that scales with the target error $\varepsilon$
  as $\mathcal{O}(1/\varepsilon)$.
  In this paper, we address the task of estimating the expectation values of
  \(M\) different observables, each to within additive error \(\varepsilon\), with the
  same \(1/\varepsilon\) dependence.
  We describe an approach that leverages Gily\'{e}n \emph{et al.}'s~quantum
  gradient estimation algorithm to achieve $\mathcal{O}(\sqrt{M}/\varepsilon)$
  scaling up to logarithmic factors, regardless of the commutation properties of
  the $M$ observables.
  We prove that this scaling is worst-case optimal in the high-precision regime if the state preparation is treated as a black box, even when the operators are mutually commuting.
  We highlight the flexibility of our approach by presenting several
  generalizations, including a strategy for accelerating the estimation of a
  collection of dynamic correlation functions.
\end{abstract}
\maketitle

\subsection*{Introduction}

A fundamental task of quantum simulation is to perform an
experiment \textit{in silico}.
Like traditional experimentalists, researchers using
quantum computers will often be interested in efficiently measuring a collection of properties.
For example, the electronic ground state problem is frequently cited as a
motivation for quantum simulation of chemistry, but determining the ground state
energy is only a starting point in most chemical applications.
Depending on context, it may be essential to measure the dipole moment and
polarizability, the electron density, the forces experienced by the classical
nuclei, or various other quantities~\cite{Pulay1979-gd, Gregory1997-yy}.
Similarly, in condensed matter physics and beyond, correlation functions play a
central role in the theory of quantum many-body phenomena due to
their interpretability and measurability in the lab~\cite{Damascelli2004-yk, Rickayzen2013-xj}.

In this letter, we consider the problem of accurately and efficiently estimating
multiple properties from a quantum computation. We focus on evaluating the
expectation values of a collection of \(M\) Hermitian operators \(\{O_j\}\) with
respect to a pure state \(\ket{\psi}\). We aim to evaluate each expectation
value to within additive error \(\varepsilon\) using as few calls as possible
to a state preparation oracle for \(\ket{\psi}\) (or its inverse). One simple
approach is to repeatedly prepare \(\ket{\psi}\) and projectively measure mutually commuting subsets of \(\{O_j\}\). Alternatively, strategies
based on amplitude estimation achieve a quadratic speedup with respect to
\(\varepsilon\) but entail measuring each observable separately~\cite{Brassard2000-oi, Knill2007-ci, Rall2020-in}. A range of newer
``shadow tomography'' techniques use joint measurements of multiple copies
of $\ket{\psi}$ to achieve polylogarithmic scaling with respect to \(M\) at the
expense of an unfavorable \(1/\varepsilon^4\) scaling~\cite{Aaronson2020-ei,
  Brandao2017-gl, Van_Apeldoorn2019-xl, Huang2021-pr}. In certain situations,
randomized methods based on the idea of ``classical shadows'' of the state obtain
\(1/\varepsilon^2\) scaling while improving upon sampling
protocols with deterministic measurement settings~\cite{Huang2020-kn, Zhao2021-fv}. We review these existing approaches in \app{prior_estimation_work}{I} and compare them to our new strategy in \tab{cost_comparison} and \app{applications}{II}.

Our main contribution is an algorithm that achieves the
same \(1/\varepsilon\) scaling as methods based on amplitude estimation, but
also improves the scaling with respect to \(M\) from \(\bigot{M}\)
to \(\bigot{\sqrt{M}}\), where the tilde in \(\bigot{\cdot}\) hides logarithmic factors.
Our approach is to construct a function $f$ whose gradient yields the
expectation values of interest and encode $f$ in a
parameterized quantum circuit.
We can then apply Gily\'{e}n \emph{et al.}'s quantum algorithm for gradient
estimation~\cite{Gilyen2017-gk} to obtain the desired scaling.
The following theorem formalizes our result.
\begin{theorem}\label{thm:ub}
  Let $\{O_j\}$ be a set of $M$ Hermitian operators on \(N\) qubits, with
  spectral norms \(\|O_j\| \leq 1\) for all \(j\).
  There exists a quantum algorithm that, for any $N$-qubit quantum state
  $\ket{\psi}$ prepared by a unitary $U_\psi$, outputs estimates
  $\widetilde{o_j}$ such that $ | \widetilde{o_j}
    - \bra{\psi} O_j \ket{\psi}| \le \varepsilon$ for all \(j\) with probability at
  least $2/3$, using \(\bigot{\sqrt{M}/\varepsilon}\) queries to $U_\psi$ and
  $U_\psi^\dagger$, along with $\bigot{\sqrt{M} /\varepsilon}$ gates of the form
  controlled-$e^{-ix O_j}$ for each \(j\), for various values of $x$ with \(|x| \in
  \bigo{1/\sqrt{M}}\).
\end{theorem}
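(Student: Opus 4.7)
The plan is to encode the $M$ expectation values as coordinates of the gradient of a single scalar function and extract them in parallel using Gily\'en \emph{et al.}'s quantum gradient algorithm. Let
\begin{equation*}
U(\vec{x}) \defeq \prod_{j=1}^{M} e^{-i x_j O_j}
\end{equation*}
in a fixed but arbitrary ordering, and set $f(\vec{x}) \defeq \tfrac{1}{2}\,\mathrm{Im}\,\bra{\psi}U(\vec{x})\ket{\psi}$. A direct Taylor expansion gives $\partial_j f(\vec{0}) = -\tfrac{1}{2}\bra{\psi}O_j\ket{\psi}$, so the task reduces to estimating $\nabla f(\vec{0})$ in $\ell_\infty$-norm to precision $\varepsilon/2$.

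I would realize $f$ as the outcome bias of an imaginary Hadamard test: prepare $\ket{\psi}$ with $U_\psi$, introduce one ancilla qubit in an appropriate superposition, apply the $M$ ancilla-controlled exponentials $\controlled{e^{-i x_j O_j}}$ in order, and measure the ancilla in a suitably rotated basis. The measurement probability is $\tfrac{1}{2}+f(\vec{x})$, and the circuit costs one call to $U_\psi$ and one $\controlled{e^{-i x_j O_j}}$ gate per $j$. A standard probability-to-phase-oracle reduction then yields $\ket{\vec{x}}\mapsto e^{i \tau f(\vec{x})}\ket{\vec{x}}$ with only polylogarithmic overhead, which is what Gily\'en \emph{et al.}'s gradient algorithm consumes as input.

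Applying the gradient theorem on the cube $[-r,r]^M$ with $r = \Theta(1/\sqrt{M})$ returns $\widetilde{\vec{g}}$ satisfying $\|\widetilde{\vec{g}} - \nabla f(\vec{0})\|_\infty \le \varepsilon/2$ using $\bigot{\sqrt{M}/\varepsilon}$ phase-oracle queries. The smoothness hypothesis is easy to verify uniformly on the cube: differentiating the ordered product $U(\vec{x})$ inserts at most one factor of $O_j$ of operator norm $\le 1$ per order, so $\|\partial^{\vec\alpha} U(\vec{x})\|\le \prod_j \|O_j\|^{\alpha_j}\le 1$, and therefore $|\partial^{\vec\alpha} f(\vec{x})|\le \tfrac{1}{2}$ uniformly in $\vec{x}$ and $\vec\alpha$. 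Unpacking the oracle cost then produces $\bigot{\sqrt{M}/\varepsilon}$ calls to $U_\psi$ and $U_\psi^\dagger$ together with $\bigot{\sqrt{M}/\varepsilon}$ gates $\controlled{e^{-i x O_j}}$ per $j$, each with $|x|\le r \in \bigo{1/\sqrt{M}}$, matching the statement. Reporting $\widetilde{o_j} \defeq -2\widetilde{g_j}$ and boosting the success probability up to $2/3$ by median-of-three completes the construction.

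The main technical obstacle is choosing $r$ and invoking the gradient theorem with the right smoothness statement so that the $\bigot{\sqrt{M}/\varepsilon}$ scaling survives --- a sloppier analysis would easily lose an extra $\sqrt{M}$ to the Taylor-remainder bias of $f$. Because the $O_j$ need not commute, derivatives of the ordered product $U(\vec{x})$ are nested operator expressions that must be bounded \emph{uniformly} on the cube rather than only at the origin, and the rescaling $r = \Theta(1/\sqrt{M})$ is precisely what keeps both the per-coordinate step and the cumulative nonlinearity of $f$ inside the algorithm's tolerance window. The remaining bookkeeping --- probability-to-phase conversion, absorbing polylogarithmic factors into $\bigot{\cdot}$, and the success-probability boost --- is routine.
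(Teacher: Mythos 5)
Your proposal is correct and follows essentially the same route as the paper: encode the expectation values as the gradient at the origin of $f(\bm{x})\propto\mathrm{Im}\,\bra{\psi}\prod_j e^{-ix_jO_j}\ket{\psi}$, realize $f$ via a controlled Hadamard test as a probability oracle, convert to a phase oracle, and invoke Gily\'en \emph{et al.}'s gradient algorithm with grid scale $r=\Theta(1/\sqrt{M})$. The only differences are cosmetic normalizations (the paper uses $e^{-2ix_jO_j}$ and shifts $f$ so that $\nabla f(\bm{0})$ equals the expectation values directly, whereas you rescale by $-2$ at the end), and your uniform derivative bound $|\partial^{\bm{\alpha}}f|\le 1/2$ is, if anything, slightly tighter than the paper's $2^{k-1}$.
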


As we show in \cor{lower_bounds}, this query complexity is worst-case
optimal (up to logarithmic factors) in the high-precision regime where \(\varepsilon \in (0, \frac{1}{3\sqrt{M}})\).
After establishing this lower bound for our problem, we review the gradient
algorithm of \citen{Gilyen2017-gk} and present the proof of \thm{ub}.
We then discuss several extensions of our approach, including a strategy for estimating multiple dynamic correlation functions and a method that handles
observables with arbitrary norms (or precision requirements) based on a
generalization of the gradient algorithm.

\begin{table}[]
  \begin{tabularx}{.483\textwidth}{@{}lXXX@{}}
    \textbf{\;\;\;\;\;\;\;\;\;\;\;\;\;\;} & \textbf{Comm.}                           & \textbf{Non-comm.}                       & \textbf{\(k\)-RDM}                                       \\ \toprule
    Sampling                              & \(\bigo{\frac{\log M}{\varepsilon^2}}\)  & \(\bigot{\frac{M}{\varepsilon^2}}\)      & \(\bigot{\frac{N^k}{\varepsilon^2}}\)~\cite{Zhao2021-fv} \\ \midrule
    Amp. Est.~\cite{Knill2007-ci}         & \(\bigot{\frac{M}{\varepsilon}}\)        & \(\bigot{\frac{M}{\varepsilon}}\)        & \(\bigot{\frac{N^{2k}}{\varepsilon}}\)                 \\ \midrule
    Shadow Tom.~\cite{Huang2021-pr}       & \(\bigo{\frac{\log M}{\varepsilon^4}}\)  & \(\bigo{\frac{\log{M}}{\varepsilon^4}}\) & \(\bigo{\frac{k \log{N}}{\varepsilon^4}}\)               \\ \midrule \midrule
    Gradient                              & \(\bigot{\frac{\sqrt{M}}{\varepsilon}}\) & \(\bigot{\frac{\sqrt{M}}{\varepsilon}}\) & \(\bigot{\frac{N^k}{\varepsilon}}\)                      \\
  \end{tabularx}
  \caption{ A comparison of the (worst-case) complexities, in terms of state
    preparation oracle queries, of different approaches for measuring multiple
    observables. We consider three applications: estimating the expectation
    values of $M$ commuting or non-commuting observables, and
    determining the fermionic \(k\)-RDM of an \(N\)-mode system. Here,
    \(\varepsilon\) denotes the additive error to which each quantity is
    estimated. We compare strategies based on naive sampling, amplitude
    estimation, and shadow tomography to our gradient-based approach. 
    We cite the
    specific works used to determine these complexities, including the
    Pauli-specific shadow protocol of~\citen{Huang2021-pr}. Note that methods based on sampling and shadow tomography also work under a weaker input model where only copies of the state are provided.
  }
  \label{tab:cost_comparison}
\end{table}

\subsection*{Lower Bounds}
\label{sec:lower_bounds}

In \citen{Van_Apeldoorn2021-sk}, Apeldoorn proved a lower bound for a task that
is essentially a special case of our quantum expectation value problem.
We explain how a lower bound for our problem can be obtained as a
corollary.
Their results are expressed in terms of a particular quantum access model for
classical probability distributions:
\begin{definition}[Sample oracle for a probability distribution]
  \label{def:probability_oracle_classical_distribution}
  Let \(\vb{p}\) be a probability distribution over \(M\) outcomes, i.e.,
  \(\vb{p} \in [0,1]^M\) with \(\|\vb{p}\|_1 = 1\).
  A \textit{sample oracle} \(U_{\vb{p}}\) for \(\vb{p}\) is a unitary operator that acts
  as
  \begin{equation}
    \label{eq:multi_dimensional_probability_oracle}
    U_{\vb{p}}: \ket{0}\ket{0} \mapsto \sum_{j=1}^M \sqrt{p_j} \ket{j} \otimes \ket{\phi_j},
  \end{equation}
  where the \(\ket{\phi_j}\) are arbitrary normalized quantum states.
\end{definition}

We rephrase Lemma 13 of \citen{Van_Apeldoorn2021-sk} below.
Here and throughout this paper, we count queries to a unitary oracle \(U\) and to
its inverse \(U^\dagger\) as equivalent in cost.
\begin{theorem}[Lemma 13,~\citen{Van_Apeldoorn2021-sk} (rephrased)]
  \label{thm:lower_bound_apeldoorn}
  Let \(M\) be a positive integer power of \(2\) and let \(\varepsilon \in (0, \frac{1}{3\sqrt{M}})\).
  There exists a known matrix \(A \in \{-1,+1\}^{M \times M}\) such that the
  following is true.
  Suppose \(\mathcal{A}\) is an algorithm that, for every probability distribution
  \(\vb{p}\), accessed via a sample oracle \(U_{\vb{p}}\), outputs (with
  probability at least \(2/3\)) a \(\vb{\tilde{q}}\) such that \(\|A \vb{p} -
  \vb{\tilde{q}}\|_\infty \leq \varepsilon\).
  Then \(\mathcal{A}\) must use \(\bigomega{{\sqrt{M}}/{\varepsilon}}\) queries to \(U_{\vb{p}}\) in the worst case.
\end{theorem}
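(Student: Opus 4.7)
The plan is to treat \thm{lower_bound_apeldoorn} as a restatement of an existing result and devote the proof primarily to verifying the correspondence with Lemma 13 of \citen{Van_Apeldoorn2021-sk}. First I would line up the notational translations: the sample-oracle model in \defi{probability_oracle_classical_distribution} matches the one used there up to relabeling of ancillary registers; the target condition \(\|A\vb{p} - \vb{\tilde{q}}\|_\infty \leq \varepsilon\) is a standard encoding of componentwise estimation of \(M\) linear functionals of \(\vb{p}\); and the restriction that \(M\) is a power of \(2\), together with the high-precision window \(\varepsilon \in (0, 1/(3\sqrt{M}))\), matches the natural regime in which a Hadamard-type construction is exploited.

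Next I would reconstruct the core reduction behind the original bound, so that the rephrased version is self-contained enough for use in \cor{lower_bounds}. The plan is to take \(A\) to be the \(M \times M\) Hadamard matrix (available because \(M\) is a power of \(2\)) and, for each hidden string \(x \in \{-1,+1\}^M\), consider a probability distribution \(\vb{p}^x\) whose entries are small perturbations of uniform of the form \(p^x_j = (1 + c\,\varepsilon x_j)/M\) for a fixed constant \(c\). The \(j\)-th entry of \(A\vb{p}^x\) is then, up to an absolute scaling, a Hadamard-weighted combination of the bits of \(x\), and an \(\infty\)-norm estimate at precision well below \(c\,\varepsilon\) suffices to recover \(x\) entirely. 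The lower bound then follows by invoking a known \(\bigomega{\sqrt{M}/\varepsilon}\) lower bound on recovering such a string from queries to a sample oracle of this kind, which is proved in \citen{Van_Apeldoorn2021-sk} by combining a hybrid argument with Grover-type query lower bounds for biased-coin distinguishing.

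The main obstacle I anticipate is calibrating the perturbation. The entries \(p^x_j\) must stay nonnegative and \(\ell_1\)-normalized, which forces the perturbation to scale as \(\bigo{\varepsilon}\); meanwhile, the Hadamard averaging in \(A\vb{p}^x\) must produce separations of at least \(\varepsilon\) in the \(\infty\)-norm so that the decoding step is information-theoretically possible. It is this tension that pins the bound to the high-precision window and ultimately forces the \(\sqrt{M}/\varepsilon\) scaling. Rather than redoing this calibration from scratch, I would appeal directly to the parameter ranges already verified in \citen{Van_Apeldoorn2021-sk}, verify that the sample-oracle model and the family of hard instances used there match \defi{probability_oracle_classical_distribution} and the regime of \thm{lower_bound_apeldoorn}, and package the conclusion in the notation of this paper so that \cor{lower_bounds} can invoke it cleanly downstream.
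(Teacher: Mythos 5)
The paper gives no proof of this statement at all---it is imported directly from Lemma~13 of \citen{Van_Apeldoorn2021-sk} with only a notational rephrasing---so your plan of verifying the correspondence with the sample-oracle model of \defi{probability_oracle_classical_distribution} and then citing the external result is precisely what the paper does, and is the correct way to handle it. Your supplementary reconstruction is in the right spirit (Hadamard matrix, perturbed-uniform hard instances), but note that the perturbation should be calibrated at relative size $\Theta(\varepsilon\sqrt{M})$ rather than $\Theta(\varepsilon)$: with $p^x_j=(1+c\,\varepsilon x_j)/M$ the typical entries of $A\vb{p}^x$ are only $O(\varepsilon/\sqrt{M})$, so an $\varepsilon$-accurate estimator could output $\vb{0}$ and learn nothing, whereas a relative perturbation of $c\,\varepsilon\sqrt{M}$ yields $\Theta(\varepsilon)$ separations and its positivity constraint is exactly what produces the window $\varepsilon\in(0,\tfrac{1}{3\sqrt{M}})$---a calibration you rightly defer to the cited reference, so this does not undermine your citation-based argument.
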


We can use this theorem to derive the following corollary, establishing the near-optimality of the algorithm in~\thm{ub} in certain regimes.
\begin{corollary}
  \label{cor:lower_bounds}
  Let \(M\) be a positive integer power of \(2\) and let \(\varepsilon \in (0, \frac{1}{3\sqrt{M}})\).
  Let \(\mathcal{A}\) be any algorithm that takes as an input an arbitrary
  set of \(M\) observables \(\{O_j\}\).
  Suppose that, for every quantum state \(\ket{\psi}\), accessed via a state
  preparation oracle \(U_{\psi}\), \(\mathcal{A}\) outputs estimates of each
  \(\ev{O_j}{\psi}\) to within additive error \(\varepsilon\) (with probability at
  least \(2/3\)).
  Then, there exists a set of observables \(\{O_j\}\) such that \(\mathcal{A}\)
  applied to \(\{O_j\}\) must use \(\bigomega{{\sqrt{M}}/{\varepsilon}}\)
  queries to \(U_{\psi}\).
\end{corollary}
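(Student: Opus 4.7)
The plan is to derive the corollary by a direct reduction from the distribution-estimation problem of \thm{lower_bound_apeldoorn}. Given the known matrix \(A \in \{-1,+1\}^{M\times M}\) guaranteed by that theorem, I would define the \(M\) observables
\begin{equation}
  O_j \defeq \sum_{k=1}^{M} A_{jk}\, \ket{k}\bra{k} \otimes \idmat,
\end{equation}
acting on a register large enough to contain both the \(M\)-dimensional label space and the auxiliary space on which the states \(\ket{\phi_j}\) of \defi{probability_oracle_classical_distribution} live. Since \(A_{jk} \in \{-1,+1\}\), each \(O_j\) is Hermitian and diagonal in the computational basis with eigenvalues of magnitude \(1\), so \(\|O_j\| = 1\); hence they form a legitimate input to any algorithm \(\mathcal{A}\) satisfying the hypothesis.

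The key observation is that a sample oracle \(U_{\vb{p}}\) of \defi{probability_oracle_classical_distribution} is exactly a state preparation oracle in the sense of \thm{ub}: setting \(U_\psi \defeq U_{\vb{p}}\) prepares the pure state \(\ket{\psi} = \sum_j \sqrt{p_j}\ket{j}\ket{\phi_j}\). A short calculation exploiting the diagonal structure of \(O_j\) and the normalization of the \(\ket{\phi_j}\) gives
\begin{equation}
  \ev{O_j}{\psi} = \sum_{k=1}^{M} A_{jk}\, p_k = (A\vb{p})_j,
\end{equation}
so estimating every \(\ev{O_j}{\psi}\) to additive error \(\varepsilon\) is precisely the same task as estimating \(A\vb{p}\) to \(\ell_\infty\) error \(\varepsilon\).

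From this, the reduction is immediate: any algorithm \(\mathcal{A}\) satisfying the hypothesis of the corollary, when fed the observables \(\{O_j\}\) above and the oracle \(U_\psi = U_{\vb{p}}\), outputs with probability at least \(2/3\) a vector \(\widetilde{\vb{q}}\) with \(\|A\vb{p} - \widetilde{\vb{q}}\|_\infty \leq \varepsilon\). By \thm{lower_bound_apeldoorn}, there exists a distribution \(\vb{p}\) for which this requires \(\bigomega{\sqrt{M}/\varepsilon}\) queries to \(U_{\vb{p}} = U_\psi\), and this establishes the claimed bound for the particular set of observables \(\{O_j\}\) constructed from \(A\).

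I do not expect any serious obstacle: the reduction is essentially bookkeeping once one notices that diagonal \(\pm 1\)-valued observables in the label register read out linear functionals of the probability vector. The only item that warrants a moment of care is verifying that the query model matches — i.e., that a sample oracle for \(\vb{p}\) and a state preparation oracle for \(\ket{\psi}\) are the same object and that queries to \(U_\psi^\dagger\) available to \(\mathcal{A}\) are likewise charged in the Apeldoorn lower bound — which is why the excerpt explicitly declares inverse queries to count equally.
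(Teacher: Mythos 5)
Your reduction is correct, but it routes the matrix \(A\) through the observables rather than through the state preparation unitary, which is genuinely different from what the paper does. The paper keeps the observables maximally simple---single-qubit Pauli operators \(O_j = Z_j\)---and instead absorbs \(A\) into the state preparation oracle by defining \(U_\psi = U_A(\idmat \otimes U_{\vb{p}})\), where \(U_A\) is a known unitary of controlled-\(X\) gates that writes the column \((1-A_{ij})/2\) onto \(M\) extra qubits so that \((A\vb{p})_i = \ev{Z_i}{\psi(U_{\vb{p}})}\). You instead take \(U_\psi = U_{\vb{p}}\) verbatim and define \(O_j = \sum_k A_{jk}\ketbra{k}\otimes\idmat\); your computation \(\ev{O_j}{\psi} = (A\vb{p})_j\) is right, the norms are exactly \(1\), and the query accounting (one \(U_{\vb{p}}\) call per \(U_\psi\) call, inverses counted equally) goes through, so the contradiction with \thm{lower_bound_apeldoorn} is obtained just as in the paper. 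What your version buys is slightly lighter bookkeeping: no auxiliary unitary \(U_A\) to define and no extra qubit register. What the paper's version buys is a stronger-flavored hard instance: the observables are local, single-qubit Paulis, which is what lets the authors later remark that even mutually commuting, maximally simple observables saturate the bound, and which underlies their footnote resolving Apeldoorn's open question by encoding an arbitrary \(A \in [-1,1]^{M\times M}\) into \(U_A\) (your diagonal-observable trick would only handle \(\pm 1\) entries without an extra decomposition step). Your observables are still mutually commuting, so the commutativity moral survives in your version as well; both proofs are valid.
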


\begin{proof}
  Assume for the sake of contradiction that for any $\{O_j\}$ and $U_\psi$, the algorithm $\mathcal{A}$ uses $o(\sqrt{M}/\varepsilon)$
  queries to $U_\psi$ to estimate every $\bra{\psi}O_j\ket{\psi}$ to within
  error $\varepsilon$ (with success probability at least $2/3$).
  For any sample oracle $U_{\vb{p}}$ of the form in
  \eq{multi_dimensional_probability_oracle}, consider the state
  \begin{equation}
    \ket{\psi(U_{\vb{p}})} \defeq \sum_{j=1}^M \sqrt{p_j} \Big( \bigotimes_{i=1}^M \ket{\frac{1 - A_{ij}}{2}}\Big) \otimes \ket{j} \otimes \ket{\phi_j}.
  \end{equation}
  A quick computation verifies that the $i$-th entry of the vector $A\vb{p}$ is
  equal to \(\ev{Z_i}{\psi(U_{\vb{p}})}\), where $Z_i$
  denotes the Pauli $Z$ operator acting on the $i$-th qubit.
  Since the matrix $A$ is known, it is clear that $\ket{\psi(U_{\vb{p}})} = U_A
    (I\otimes U_{\vb{p}}) \ket{0}$ for a known unitary $U_A$:
  \begin{equation}
    \label{eq:O_ap}
    U_{A} = \sum_{j} \big(\bigotimes_{i=1}^M X_{i}^{\delta_{A_{ij}, -1}}\big) \otimes \ketbra{j} \otimes \idmat.
  \end{equation}
  Therefore, we can apply algorithm \(\mathcal{A}\) with \(O_j = Z_j\) for \(j
  \in \{1, \cdots, M\}\) and \(U_\psi = U_A (\idmat \otimes U_{\vb{p}})\).
  By our assumption, this constitutes an algorithm that for every $U_{\vb{p}}$, estimates each entry of \(A\vb{p}\) to within
  error \(\varepsilon\) using \(o(\sqrt{M}/\varepsilon)\)
  queries to \(U_{\vb{p}}\), contradicting \thm{lower_bound_apeldoorn}, and
  completing the proof.
\end{proof}

\subsection*{Background on Gily\'{e}n et al.'s gradient algorithm}

Our framework for simultaneously estimating multiple expectation values uses the improved quantum algorithm for gradient estimation of
Gily\'{e}n, Arunachalam, and Wiebe (henceforth, Gily\'{e}n \emph{et
  al.})~\cite{Gilyen2017-gk}.
Gily\'{e}n \textit{et al.}~built on earlier work by Jordan~\cite{Jordan2005-hs},
which demonstrated an exponential quantum speedup for computing the gradient in
a particular black-box access model.
Specifically, Jordan's algorithm uses one query to a \textit{binary oracle} (see
\app{more_gradient_details}{III}) for a function $f$, along with phase
kickback and the quantum Fourier transform, to obtain an approximation
of the gradient $\nabla f$.

While we defer a technical discussion of Gily\'{e}n \emph{et al.}'s algorithm to \app{more_gradient_details}{III} (and refer the reader also to~\citen{Gilyen2017-gk}), we give a brief, colloquial description of their algorithm here.
It is helpful to review their definition for a \textit{probability oracle},
\begin{definition}[Probability oracle]
  \label{def:probability_oracle_main_text}
  Consider a function \(f: \mathbb{R}^M \rightarrow [0, 1]\).
  A probability oracle \(U_f\) for \(f\) is a unitary operator that acts
  as
  \begin{align} \label{eq:U_f}
    U_f:& \ket{\bm{x}}\ket{\bm{0}} \mapsto \\
    &\ket{\bm{x}}\left(\sqrt{f(\bm{x})}  \ket{1} \ket{\phi_1(\bm{x})} + \sqrt{1 - f(\bm{x})}\ket{0}\ket{\phi_0(\bm{x})}\right), \nonumber
  \end{align}
  where \(\ket{\bm{x}}\) denotes a discretization of the variable \(\bm{x}\)
  encoded into a register of qubits, \(\ket{\bm{0}}\) denotes the all-zeros state of a register of ancilla qubits, and \(\ket{\phi_0(\bm{x})}\) and \(\ket{\phi_1(\bm{x})}\) are
  arbitrary quantum states.
\end{definition}
Gily\'{e}n \textit{et al.} show how such a probability oracle can be used to encode a finite-difference approximation to a directional derivative of \(f\) in the phase of an ancilla register, e.g., a first-order approximation is implemented by
\begin{equation}
    A_{f'_1}: \ket{\bm{x}}\ket{\bm{0}} \mapsto e^{i \left(f\left(\bm{x}\right) - f\left(\bm{-x}\right)\right)} \ket{\bm{x}}\ket{\bm{0}}.
\end{equation}
As in Jordan's original algorithm, a quantum Fourier transform can then be used to extract an approximate gradient from the phases accumulated on an appropriate superposition of basis states. By using higher-order finite-difference formulas,
Gily\'{e}n \textit{et al.} are able to estimate the gradient with a scaling that is optimal (up to logarithmic factors) for a particular family of smooth functions.
We restate the formal properties of their algorithm in the theorem below. 

\begin{theorem}[Theorem 25, \citen{Gilyen2017-gk} (rephrased)]
  \label{thm:gradient_algorithm}
  Let \(\varepsilon\), \(c \in \mathbb{R}_{+}\) be fixed constants, with \(\varepsilon
  \leq c\).
  Let $M \in \mathbb{Z}_{+}$ and $\bm{x} \in \mathbb{R}^M$.
  Suppose that \(f: \mathbb{R}^M \rightarrow \mathbb{R}\) is an analytic
  function such that for every \(k \in \mathbb{Z}_+\), the following bound holds
  for all \(k\)-th order partial derivatives of \(f\) at $\bm{x}$ (denoted by
  \(\partial_{{\bm{\alpha}}} f(\bm{x})\)): $|\partial_{{\bm{\alpha}}} f(\bm{x})
    | \leq c^kk^{\frac{k}{2}}$.
  Then, there is a quantum algorithm that outputs an estimate
  \(\widetilde{\bm{g}} \in \mathbb{R}^M\) such that $\|\nabla f(\bm{x}) -
    \widetilde{\bm{g}}\|_\infty \leq \varepsilon$, with probability at least \(1 -
  \delta\).
  This algorithm makes \(\bigot{c \sqrt{M} \log(M/\delta)/\varepsilon }\) queries
  to a probability oracle for \(f\).
\end{theorem}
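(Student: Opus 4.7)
The plan is to generalize Jordan's original gradient algorithm, which combines quantum phase kickback with the quantum Fourier transform (QFT) to read off gradient components from phases applied in superposition. The first step is to convert the probability oracle $U_f$ of \defi{probability_oracle_main_text} into a \emph{phase oracle} implementing $\ket{\bm{y}}\ket{\bm{0}} \mapsto e^{i\theta f(\bm{y})}\ket{\bm{y}}\ket{\bm{0}}$ for a chosen scale $\theta$. This is the key technical departure from Jordan's setting, which assumed a binary oracle for $f$: since $f$ here lives only as an amplitude, the conversion is carried out via a block-encoding / quantum-singular-value-transformation construction that costs $\mathcal{O}(\theta)$ queries to $U_f$ and $U_f^\dagger$ per use of the resulting phase oracle, with phase error that can be made arbitrarily small.

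The second step is to combine shifted copies of this phase oracle into a \emph{high-order central finite-difference} approximation of the directional derivative of $f$ at $\bm{x}$,
\begin{equation*}
  \ket{\bm{y}}\ket{\bm{0}} \mapsto \exp\!\Big(i\,\tfrac{\theta}{h}\sum_{\ell=-n}^{n} a_\ell\, f(\bm{x}+\ell h \bm{y})\Big)\ket{\bm{y}}\ket{\bm{0}},
\end{equation*}
where the $\{a_\ell\}$ are the coefficients of a $(2n)$-th order symmetric difference rule. This costs $(2n{+}1)\mathcal{O}(\theta)$ queries. Expanding $f$ in a Taylor series around $\bm{x}$, the exponent equals $\theta\, \bm{y}\cdot\nabla f(\bm{x})$ plus a truncation error, and the Gevrey-type derivative bound $|\partial_{\bm{\alpha}} f(\bm{x})|\leq c^{k} k^{k/2}$ is precisely what is needed to drive this truncation below $\varepsilon$ with only $n = \widetilde{\mathcal{O}}(\log(M/\varepsilon))$ nodes.

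The third step is Jordan's trick: prepare a uniform superposition of $\bm{y}$ on a discrete grid inside a hypercube of radius $r$ centered at the origin, apply the finite-difference phase oracle, and then perform an inverse QFT on each of the $M$ coordinate registers. With $\theta$, $h$, and $r$ chosen so that (i) $\theta r$ is commensurate with the QFT frequency resolution required for $\varepsilon$-precise read-out on each axis, and (ii) the step $hr$ lies well inside the convergence region implied by the derivative bound, each register yields an $\varepsilon$-accurate estimate of the corresponding $\partial_{x_j} f(\bm{x})$ with constant probability. Standard median-of-means boosting over $\mathcal{O}(\log(M/\delta))$ independent runs, together with a union bound over the $M$ coordinates, then delivers the claimed $1-\delta$ guarantee in $\ell_\infty$ norm.

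The main obstacle is the coupled choice of parameters. To simultaneously keep $f(\bm{x}+\ell h \bm{y})\in[0,1]$ (so the probability oracle is well-defined on the shifted arguments), force the Taylor remainder beyond the linear term to be subdominant to the QFT precision over the entire hypercube of radius $r$, and control the finite-difference truncation with only polylogarithmically many nodes, one is driven to take $r=\mathcal{O}(1/(c\sqrt{M}))$: the typical distance $\|\bm{y}\|_2\sim r\sqrt{M}$ appearing in bounds on second- and higher-order Taylor remainders is what demands the $1/\sqrt{M}$ shrinkage. Inverting this in the phase scale, $\theta\sim 1/(r\varepsilon)$, produces the $c\sqrt{M}/\varepsilon$ factor in the query complexity. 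Verifying this balance carefully, and confirming that the $\mathcal{O}(1/\mathrm{poly})$ per-call phase-conversion errors do not accumulate badly across the $2n{+}1$ composed oracles or the $\mathcal{O}(\log(M/\delta))$ repetitions, is the bulk of the technical work.
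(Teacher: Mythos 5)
Your overall architecture matches the paper's (which follows Gily\'{e}n \emph{et al.}): convert the probability oracle into (fractional powers of) a phase oracle, load a degree-$2m$ central-difference approximation of $\bm{y}\cdot\nabla f(\bm{x})$ into a phase over a superposition of grid points, read it out with Jordan's inverse-QFT trick, and boost the success probability with $\mathcal{O}(\log(M/\delta))$ repetitions, a median, and a union bound; your parameter scales ($m=\mathcal{O}(\log(c\sqrt{M}/\varepsilon))$ difference nodes, grid radius $r=\mathcal{O}(1/(c\sqrt{M}))$, phase amplification $\sim 1/(r\varepsilon)$) are also the correct ones. (A minor quibble: the probability-to-phase conversion in Gily\'{e}n \emph{et al.} is an LCU-style construction costing $\mathcal{O}(\theta+\log(1/\varepsilon'))$ queries for an $\varepsilon'$-accurate $e^{i\theta f}$, not literally a QSVT argument, but this does not affect the complexity.)

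There is, however, a genuine gap at the step where you require the Taylor remainder to be ``subdominant to the QFT precision \emph{over the entire hypercube} of radius $r$.'' With $r=\Theta(1/(c\sqrt{M}))$ this is impossible: the condition $|\partial_{\bm{\alpha}} f(\bm{x})|\le c^k k^{k/2}$ bounds only the \emph{entries} of the $k$-th derivative tensor, not its injective norm, so at corner-like points the degree-$k$ term can be as large as $(cMr\sqrt{k}/2)^k$. For instance, take $f(\bm{y})=g\bigl(c\sum_i y_i\bigr)$ with $g^{(k)}(0)=k^{k/2}$ and evaluate at $\bm{y}=(r/2,\dots,r/2)$; at your $r$ this term is $\Theta\bigl((\sqrt{Mk}/2)^k\bigr)$, vastly exceeding the required $\mathcal{O}(\varepsilon r)$. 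Demanding uniform control over the hypercube therefore forces $r=\mathcal{O}(1/(cM))$ and degrades the query complexity to $\bigot{cM/\varepsilon}$, losing exactly the $\sqrt{M}$ savings the theorem asserts. The missing idea---the crux of Gily\'{e}n \emph{et al.}'s analysis, reproduced in the paper's appendix for the generalized version---is a concentration argument: for $\bm{y}$ drawn uniformly from the grid, the coordinates are i.i.d.\ bounded symmetric random variables, and Hoeffding's inequality followed by Markov's inequality shows $\bigl|\sum_{\bm{\alpha}} \bm{y}^{\bm{\alpha}}\partial_{\bm{\alpha}} f(\bm{x})\bigr| \le \bigl(\mathcal{O}(1)\, c\sqrt{M}\, r\sqrt{k}\bigr)^k$ for all but a small constant fraction (e.g., $1/3840$) of grid points. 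Correspondingly, the phase-estimation readout lemma must be proved under the weakened hypothesis that the phase agrees with $\bm{y}\cdot\bm{g}$ up to small error only on all but a $1/b$ fraction of points, with the bad points absorbed into the failure probability. Your ``typical distance $\|\bm{y}\|_2\sim r\sqrt{M}$'' heuristic gestures at this, but it cannot be converted into a pointwise bound: \emph{every} point of the hypercube satisfies $\|\bm{y}\|_2\le r\sqrt{M}/2$, and the pointwise tensor-contraction bound still carries an extra factor of $M^{k/2}$, so the probabilistic treatment of the grid points---and the accompanying robustness analysis of the readout---is unavoidable.
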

\subsection*{Expectation values via the gradient algorithm}
\label{sec:speedup_by_gradient}

To construct our algorithm and prove \thm{ub}, we build a probability oracle for a function whose gradient encodes the expectation values of interest and apply the quantum algorithm for the gradient.
\begin{proof}[Proof of~\thm{ub}]
  We begin by defining the parameterized unitary
  \begin{equation}
    \label{eq:unitary_of_x}
    U(\bm{x}) \defeq \prod_{j=1}^M e^{-2 i x_j O_j}
  \end{equation}
  for $\bm{x} \in \mathbb{R}^M$.
  The derivative of this unitary with respect to \(x_\ell\) is
  \begin{equation}
    \label{eq:d_unitary_of_x}
    \frac{\partial U}{\partial x_\ell} =  -2 i \Bigg(\prod_{j=1}^\ell e^{-2 i x_j O_j}\Bigg) O_\ell \Bigg( \prod_{k=\ell + 1}^M e^{-2 i x_k O_k}\Bigg).
  \end{equation}
  We are interested in the expectation of the \(O_j\) with respect to the state
  \(\ket{\psi}\), so we define the following function \(f\):
  \begin{equation}
    \label{eq:f_def}
    f(\bm{x}) \defeq -\frac{1}{2}\mathrm{Im}[\bra{\psi} U(\bm{x})\ket{\psi}] +\frac{1}{2}.
  \end{equation}
  Using \eq{d_unitary_of_x}, we have 
  \begin{equation}
    \label{eq:derivative_magic_evaluation}
    \frac{\partial f}{\partial x_\ell}\Bigg|_{\bm{x} = \bm{0}} = \ev{O_\ell}{\psi}.
  \end{equation}
  Therefore, the gradient $\nabla f(\bm{0})$ is precisely the collection of expectation values of interest.

  Now, we verify that \(f\) satisfies the conditions of
  \thm{gradient_algorithm}.
  Observe that $f$ is analytic and that the \(k\)-th order partial derivative of \(f\) with respect to any
  collection of indices \(\alpha \in \{1,\dots, M\}^k\) takes the form
  \begin{equation}
    \label{eq:kth_derivative}
    \partial_\alpha f(\bm{x})  = (-2)^{k-1} {\rm Im}(i^{k} \ev{V(\bm{x}, {\bm{\alpha}})}{\psi}),
  \end{equation}
  for some operator \(V(\bm{x}, \alpha)\) which depends on both \(\alpha\)
  and \(\bm{x}\).
  Note that \(V\) is a product of terms which are either unitary, or from
  \(\{O_j\}\).
  Since $\|O_j\| \leq 1$ for all $j$, we have \(\|V\| \leq 1\), and therefore
  $|\partial_\alpha f(\bm{0})| \leq 2^{k-1}$ for all $k$ and $\alpha$.
  By setting \(c=2\), we satisfy the derivative conditions of
  \thm{gradient_algorithm}.

  To construct a probability oracle for \(f\)
  (see~\defi{probability_oracle_main_text}), we need a
  quantum circuit that encodes \(f(\bm{x})\) into the amplitudes of an
  ancilla.
  We construct such a circuit using the Hadamard test for the imaginary component
  of \(\bra{\psi}U(\bm{x})\ket{\psi}\)~\cite{Yu_Kitaev1995-zv, Aharonov2009-nk}.
  Let
  \begin{equation}
    \label{eq:param_circuit_def}
    F(\bm{x}) \defeq
    \big(H \otimes \idmat \big) \big( \controlled{U(\bm{x})} \big)
    \big(S^\dagger H \otimes U_\psi\big),
  \end{equation}
  where \(H\) denotes the Hadamard gate, \(\controlled{U(\bm{x})}\) the
  \(U(\bm{x})\) gate controlled on the first qubit, and \(S \coloneqq \ket{0}\!\!\bra{0} + i\ket{1}\!\!\bra{1}\) the phase gate.
  Applied to $\ket{0}\otimes \ket{\bm{0}}$, this circuit encodes \(f(\bm{x})\)
  in the amplitudes with respect to the computational basis states of the first qubit:
  \begin{align}
    \label{eq:amplitude_encode}
    F(\bm{x})\ket{0}\otimes\ket{\bm{0}} = & \sqrt{f(\bm{x})}\ket{1}\otimes \ket{\phi_1(\bm{x})} + \\ & \sqrt{1 - f(\bm{x})}\ket{0}\otimes \ket{\phi_0(\bm{x})}, \nonumber
  \end{align}
  for some normalized states $\ket{\phi_0(\bm{x})}$ and $\ket{\phi_1(\bm{x})}$ (see \app{details}{IV} for more details).
  Note that \(F(\bm{x})\) uses a single call to
  the oracle \(U_\psi\).

  \begin{figure}
    \centering
    \includegraphics[width=\linewidth]{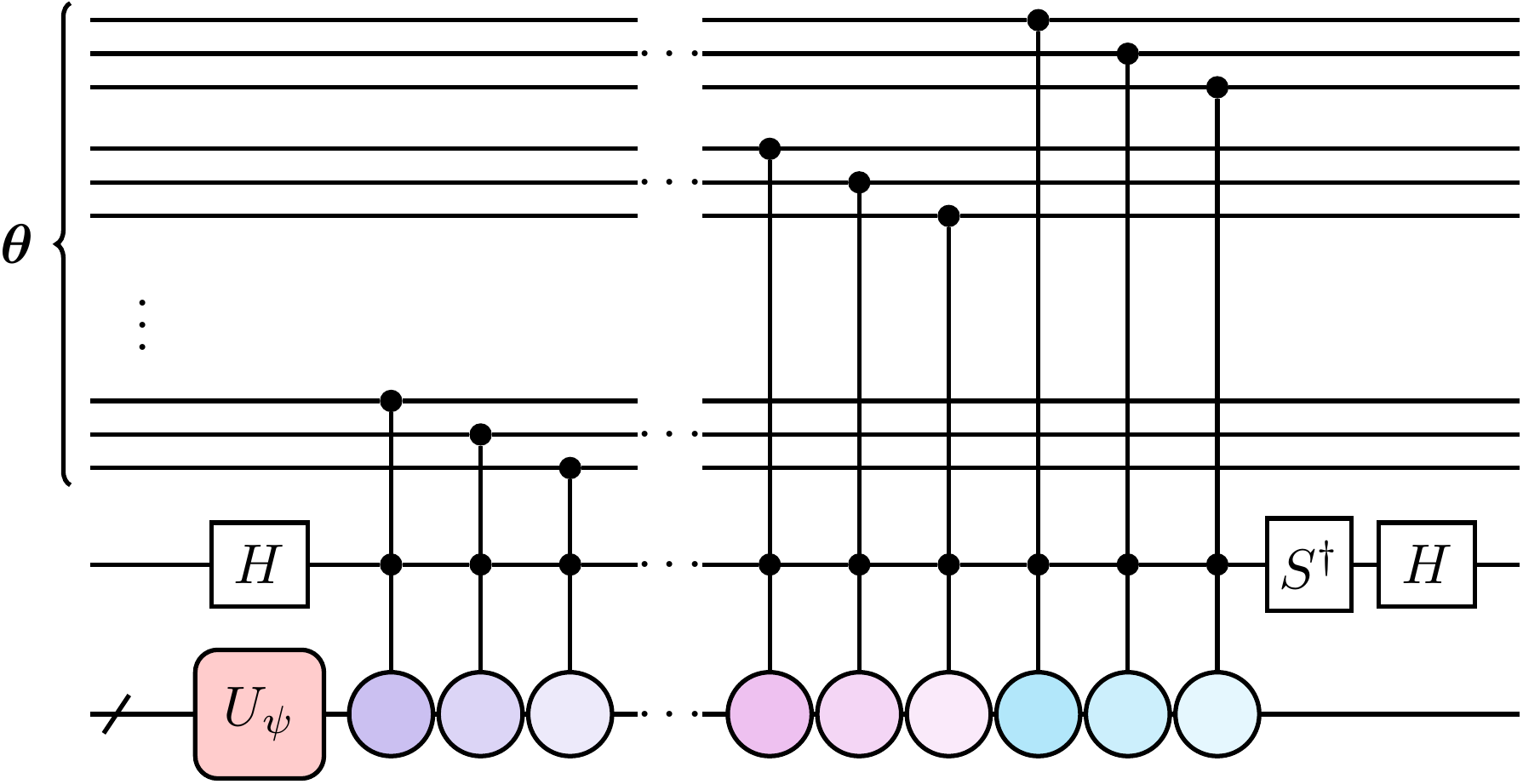}
    \caption{Schematic depiction of the quantum circuit for \(U_f\), the probability oracle for
      the function \(f(\bm{x})\) defined in \eq{U_f_def}.
      The top registers encode the ($n=3$ bit in this case) binary representations of \(x_1, x_2, \cdots, x_M\).
      The ancilla qubit whose amplitudes encodes \(f(\bm{x})\) (cf.~Eq.~\eqref{eq:param_circuit_def}) is indicated below the \(\bm{x}\) registers.
      The final line represents the \(N\)-qubit system register.
      The gates that act on the system register with colored circles represent the doubly-controlled time evolution by the various observables.
      Estimating the expectation values of the \(M\) observables \(\{O_j\}\) requires executing this circuit and its inverse \(\bigot{\sqrt{M}/\varepsilon}\) times.
    }
    \label{fig:circuit_diagram}
  \end{figure}

  All that remains is to add quantum controls to the rotations in
  \(F(\bm{x})\), so that
  $F(\bm{x})$ is controlled on a register encoding $\bm{x}$.
  Specifically, we consider the unitary
  \begin{align}
    \label{eq:U_f_def}
    U_f \defeq \sum_{\bm{k} \in G_n^M} \ketbra{\bm{k}} \otimes F(\bm{k} x_{\max}),
  \end{align}
  where \(G_n^{M}\) is a set of \(2^{nM}\) points distributed in an
  \(M\)-dimensional unit hypercube, with \(n = \bigo{\log(1/\varepsilon)}\), and
  \(x_{\max}\) is a rescaling factor.
  The values of \(x_{\max}\) and \(n\) are chosen to
  satisfy the requirements of the gradient algorithm (see
  \app{more_gradient_details}{IV}).
  Here, \(\ket{\bm{k}} = \ket{k_1}\dots \ket{k_M}\) for \(\bm{k} \in G_n^M\)
  denotes the basis state storing the binary representation of $\bm{k}$ in \(M\)
  \(n\)-qubit index registers.
  The controlled time evolution operator for each \(O_j\) can be implemented
  efficiently as a product of \(n\) controlled-$e^{-ix O_j}$ gates with
  exponentially spaced values of \(x\), each controlled on the appropriate
  qubit of the \(j\)th index register.
  We illustrate an example of such a \(U_f\) in \fig{circuit_diagram}.

  \(U_f\) is a probability oracle for the function
  \(f\), and each call to \(U_f\) involves a single call to the state preparation
  oracle \(U_\psi\).
  \thm{gradient_algorithm} then implies that with probability at least $2/3$, every component of the gradient of
  $f$, and hence all of the expectation values $\bra{\psi}O_j\ket{\psi}$, can be
  estimated to within an error \(\varepsilon\) using $\bigot{\sqrt{M}/\varepsilon}$ queries to $U_f$.
  The complexity in terms of the controlled time evolutions follows from
  multiplying the number of controlled time evolutions required for each query to
  \(U_f\), i.e.,~\(\bigo{\log(M/\varepsilon)}\) per observable, by the total number of queries, i.e.,~\(\bigot{\sqrt{M}/\varepsilon}\).
  As discussed in \app{more_gradient_details}{IV}, we have \(x_{\max} \in
  \bigo{1/\sqrt{M}}\) as a consequence of the details of the proof of
  \thm{gradient_algorithm} in \citen{Gilyen2017-gk}.
  This completes the proof of \thm{ub}.
\end{proof}
Furthermore (see \app{more_gradient_details}{IV}), the space complexity of the
gradient algorithm is the same as that of the probability oracle up to an
additive logarithmic factor~\footnote{To achieve this space complexity we
  actually need to compile the circuits for a logarithmic (in \(M\) and
  \(\varepsilon^{-1}\)) number of probability oracles across a series of hypercubes
  of varying sizes.
  Otherwise there would be additional multiplicative logarithmic factors in the
  space complexity.}.
Therefore, our algorithm uses \(\bigo{M\log(1/\varepsilon) + N}\) qubits.

\subsection*{Discussion}

In this letter, we considered the problem of simultaneously estimating the
expectation values of multiple observables with respect to a pure state
$\ket{\psi}$.
We presented an algorithm that uses
\(\widetilde{\mathcal{O}}(\sqrt{M}\varepsilon^{-1})\) applications of $U_\psi$ and
its inverse, where \(M\) denotes the number of observables and \(\varepsilon\) the
target error, and $U_\psi$ is a unitary that prepares $\ket{\psi}$.
We explained how a lower bound on a closely related problem posed
in~\citen{Van_Apeldoorn2021-sk} implies that, for algorithms given black-box
access to $U_\psi$, this query complexity is worst-case optimal up to logarithmic factors when \(\varepsilon \in (0, \frac{1}{3 \sqrt{M}})\).
In fact, our algorithm affirmatively resolves an open question
from~\citen{Van_Apeldoorn2021-sk} regarding the achievability of this bound for
the simultaneous estimation of classical random
variables~\footnote{Specifically, any matrix $A \in [-1,1]^M$ can be encoded in
  a known unitary $U_A$ along similar lines as in the proof of
  Corollary~\ref{cor:lower_bounds}; then $A\vb{p}$ can be estimated by applying
  our algorithm to the single-qubit $Z$ operators $\{Z_j\}$, with state
  preparation oracle $U_A(\mathbb{I}\otimes U_{\vb{p}})$.}.
These results imply that the optimal cost for expectation value
estimation can become exponentially worse with respect to \(M\) when one demands
a scaling that goes as \(\varepsilon^{-1}\) instead of \(\varepsilon^{-2}\).
Furthermore, the instances used in establishing our lower bounds involve a set of mutually commuting observables, implying that commutativity isn't necessarily helpful when one demands \(\varepsilon^{-1}\) scaling.

We presented a comparison with other approaches for
the estimation of expectation values in \tab{cost_comparison}, which we elaborate on in \app{prior_estimation_work}{I} and \app{applications}{II}.
For example, we find that our algorithm is capable of estimating each element of
the \(k\)-body fermionic reduced density matrix ($k$-RDM) of an \(N\)-mode system to within error \(\varepsilon\) using
\(\bigot{N^k/\varepsilon}\) state preparation queries.
This offers an unconditional asymptotic speedup compared to existing methods when \(\varepsilon = o(N^{-k/3})\).
This may be particularly useful in practical applications where we wish to
achieve a fixed error in extensive quantities by measuring the \(1\) or
\(2\)-RDM and summing \(\bigomega{N}\) elements.

Our gradient-based approach to estimating expectation values can be extended to other properties.
For example, consider the task of evaluating a collection of two-point dynamic
correlation functions.
These functions take the form
\begin{equation}
  \label{eq:dynamic_correlation_func}
  C_{A, B}(t) \defeq \ev{U(0,t) A^\dagger U(t, 0) B}{\psi},
\end{equation}
where \(A\) and \(B\) are some simple operators and \(U(t, t')\) is the time evolution
operator that maps the system from time \(t'\) to time \(t\).
These correlation functions are often directly accessible in experiment, as in the case of angle-resolved photoemission
spectroscopy~\cite{Damascelli2004-yk}, and are also central to hybrid
quantum-classical methods based on dynamical mean-field
theory~\cite{Bauer2016-hy, Georges1992-qf, Kotliar2006-rf}.
In \app{dynamic}{V}, we explain how a generalization of our approach can reduce the
number of state preparations required for estimating a collection of these
correlation functions.

Although we focused on quantifying the number of state preparation oracle queries, we also considered two other complexity
measures.
Our approach requires time evolution by each of the \(M\) observables.
The total duration of time evolution required scales as \(\bigot{M / \varepsilon}\).
We also need an additional \(\bigot{M \log(1/ \varepsilon)}\) qubits, although we can modify our approach to trade off between space and query complexities (see ~\app{time-space-trade-offs}{VI}).
When we are interested in simultaneously estimating \(O(N)\) expectation values, the asymptotic scaling of the space complexity is only logarithmically larger than that of storing the system itself.
This is the case in a variety of contexts, for example, in the evaluation of the momentum distribution~\cite{Meckel2008-mz}.
In other situations, the space overhead may be more substantial, though the capability of
modern simulation algorithms to use so-called ``dirty ancilla'' (temporarily
borrowing qubits in an arbitrary state) may offset this challenge in some
contexts~\cite{Lee2021-su, Von_Burg2021-gu, Low2018-dl}. 
As a concrete example, we consider the double-factorized simulation of the electronic structure Hamiltionian proposed in \citen{Von_Burg2021-gu}. Von Burg \emph{et al.} find that the time complexity of their simulation algorithm can be minimized by using \(\bigot{N^{3/2}}\) qubits for data-lookup. These same qubits could be used by our algorithm for expectation value estimation to parallelize the measurement of \(\bigot{N^{3/2}}\)  observables, offering a \(\bigot{N^{3/4}}\) asymptotic speedup without any additional qubit overhead.

Another potential reason for modifying our approach arises when the observables
of interest have different norms, or when the desired precision varies.
In
\app{eps_1_2_trade-offs}{VII}, we consider addressing this situation by measuring
certain observables using our strategy and measuring others using a
sampling-based method.
In \app{arbitrary_norms}{VIII}, we take a different approach,
and generalize Gily\'{e}n \emph{et al.'s} gradient estimation algorithm to
accommodate functions whose gradient components are not necessarily uniformly
bounded.
This allows us to simultaneously estimate the expectation values of
observables $\{O_j\}$ with arbitrary norms $\|O_j\|$ (possibly greater than $1$)
using $\widetilde{\mathcal{O}}(\sqrt{\sum_j \|O_j\|^2}/\varepsilon)$ queries.
By
rescaling the individual observables we can then also vary how precisely we
estimate each expectation value, thereby extending Theorem~\ref{thm:ub} to the
most general setting.

Our focus has been on the asymptotic scaling of our approach, but it will also be
desirable to understand the actual costs.
Performing a fault-tolerant resource estimate and a comparison against
other measurement strategies in the context of a practical application would be a useful
line of future work.
It is possible that our approach could be modified to obtain a further speedup
by taking advantage of the structure of the states and/or observables for
particular problems of interest.
Another potentially fruitful direction would be to explore extensions of the
gradient algorithm to yield quantum algorithms for the Hessian or even
higher-order derivatives.

Extracting useful information from a quantum computation, especially a quantum
simulation, is a bottleneck for many applications.
This is especially true in fields such as quantum chemistry and materials
science, where it may be necessary to couple high-level quantum calculations
with coarser approximations at other length scales in order to describe
macroscopic physical phenomena.
We expect that our gradient-based approach to the estimation of expectation values will be a useful tool and a
starting point for related approaches to other problems.

\subsection*{Acknowledgements}

The authors thank Bryan O'Gorman, Yuan Su, and Joonho Lee for helpful discussions and various referees for their constructive input.
NW worked on this project under a research grant from Google Quantum AI and was
also supported by the U.S.~Department of Energy, Office of Science, National
Quantum Information Science Research Centers, Co-Design Center for Quantum
Advantage under contract number DE-SC0012704.
Some discussion and collaboration on this project occurred while using facilities at the Kavli Institute for Theoretical Physics, supported in part by the National Science Foundation under Grant No. NSF PHY-1748958.

\nocite{Lin2020-iy,wan2020fast,Verteletskyi2020-ei, Huggins2021-vu,Chen2021-cq,
Hadfield2020-lg, Brandao2017-gl, Huang2021-prm, Davidson2012-ci, bonet2020nearly, Somma2002-wq, NCbook}

%

\widetext
\appendix
\setcounter{secnumdepth}{2}
\section{Prior work on expectation value estimation}
\label{app:prior_estimation_work}

This letter focuses on the task of estimating the expectation value of multiple
observables with respect to a pure state $\ket{\psi}$.
Motivated by settings where the cost of state preparation is the dominant
factor, we mainly quantify the resources required in an oracle model where we
count the number of calls to the state preparation unitary and its inverse.
To provide concrete motivation for this cost model and for the task in general,
consider the example where our state of interest is the unknown ground state of
some second-quantized electronic structure Hamiltonian under the Jordan-Wigner
transformation.
In this case, the state preparation step is expected to be tractable under
certain assumptions but relatively expensive, even using modern methods (e.g.,
by applying the ground state preparation algorithms of
\citen{Lin2020-iy,wan2020fast} in conjunction with state-of-the-art techniques
for block-encoding the Hamiltonian~\cite{Lee2021-su,Von_Burg2021-gu}).
At the same time, the observables of interest may be especially simple (e.g.,
the elements of a fermionic reduced density matrix).
We discuss the situation where the cost of state preparation does not necessarily
dominate, and the possible trade-offs available in the context of our approach,
in \app{time-space-trade-offs}.

Let \(U_\psi\) denote the unitary which prepares \(\ket{\psi}\) from the
\(\ket{0}\) state, and let \(\{O_j\}\) be a collection of \(M\) Hermitian
operators. For the sake of simplifying the comparison with existing approaches,
we make the additional assumption in this section that the \(O_j\) are also
unitary, though this requirement could be relaxed by using techniques based on
block-encodings~\cite{Rall2020-in}. As in the main text, our goal is to minimize
the number of calls to $U_\psi$ and $U_\psi^\dagger$ required to obtain
estimates $\widetilde{o_j}$ of the $M$ expectation values
$\bra{\psi}O_j\ket{\psi}$ such that
\begin{equation}
  \label{eq:ev_goal}
  |\widetilde{o_j} - \ev{O_j}{\psi}| \leq \varepsilon
\end{equation}
for all $j \in \{1,\dots, M\}$ with probability at least $2/3$.
We note that some of the methods we compare against are usable under a weaker
input model, where the
algorithm is merely provided copies of \(\ket{\psi}\) rather than access to a state preparation oracle.

A straightforward approach is to repeatedly prepare the state \(\ket{\psi}\) and
simultaneously measure as many of the operators as possible on each copy.
To this end, consider dividing the \(M\) operators into
\(G\) groups of mutually commuting terms.
Then
\begin{equation}
  \label{eq:standard_approach_cost}
  C = \mathcal{O}\left(\frac{G \log M}{\varepsilon^2}\right)
\end{equation}
calls to \(U_\psi\) suffice to estimate the $M$ expectation values.
This can be accomplished by simultaneously diagonalizing the operators within
each group and measuring in the common eigenbasis when this is tractable, or by
performing phase estimation on the operators within each group using the same
copy of $\ket{\psi}$.
The outcomes are then averaged over repeated iterations.
One key advantage of this approach is that, although it has poor scaling with
the target error $\varepsilon$, it does not scale polynomially with $M$, and
instead scales with $G$, which could be considerably smaller than $M$.
In practice, optimally grouping and sampling the observables may be challenging
and can introduce substantial overheads not captured by the query complexity
(e.g., the classical cost of finding optimal
groupings~\cite{Verteletskyi2020-ei, Huggins2021-vu} and of simultaneous
diagonalization, the quantum gate complexity of implementing basis change
unitaries corresponding to the common eigenbasis, or of phase estimation).

Alternatively, using a strategy based on amplitude
estimation~\cite{Brassard2000-oi}, we can estimate expectation values with a
scaling proportional to \(\varepsilon^{-1}\)~\cite{Knill2007-ci, Rall2020-in}.
Amplitude estimation, as originally implemented in \citen{Knill2007-ci}, allows for the estimation of the expectation value
$\langle\psi|U|\psi\rangle$
of a unitary operator $U$.
estimation algorithm of~\citen{Knill2007-ci}.
The amplitude estimation algorithm works by performing phase estimation
on the Szegedy walk operator,
\begin{equation}
  S=-RURU^{\dag},\hspace{0.5cm}R=1-2|\psi\rangle\langle\psi|.
\end{equation}
One can verify that the operator $S$ is diagonal in the subspace spanned by $|\psi\rangle$ and $U|\psi\rangle$, with eigenvalues $e^{\pm i x}$ for $x = \cos^{-1}(2|\langle\psi|U|\psi\rangle|^2-1)$.
Phase estimation on \(S\) therefore allows for the determination of \(|\ev{\psi}{U}|^2\) up to an accuracy \(\eps\) with a cost that scales as \(1/\eps\). Information about the phase may be regained by repeating this for different variants of \(U\) controlled by an ancilla qubit.
One can generalize this approach to the estimation of the expectation value of an arbitrary observable \(O\) by providing a \(U\) that block encodes \(O\)~\cite{Rall2020-in}.
Unfortunately, this algorithm does not generalize in a straightforward way to
the task of estimating multiple expectation values, even when the operators
commute (beyond the strategy of treating each one separately).
As a consequence, estimating all \(M\) expectation values using amplitude
estimation requires
\begin{equation}
  \label{eq:serial_amplitude_estimation_cost}
  C = \widetilde{\mathcal{O}}\left({\frac{M}{\varepsilon}}\right)
\end{equation}
calls to \(U_\psi\) and \(U_\psi^\dagger\).
While this leads to an asymptotic advantage over naive sampling in some
settings, it fails to do so in cases where $G/\varepsilon = o(M)$.

These two well-known strategies are complemented by the more recent body of work that began with Aaronson's definition of the ``shadow tomography'' problem,
the problem of estimating the expectation value of many two-outcome measurements
given multiple copies of some input state~\cite{Aaronson2020-ei}.
\citen{Aaronson2020-ei} proposes a computationally expensive protocol for this
task that achieves scaling logarithmic in the number of two-outcome measurements $M$ and
proportional to \(\varepsilon^{-4}\), up to logarithmic factors.
\citen{Huang2020-kn} put forward an alternative protocol based on randomized
measurement that also scales logarithmically in $M$, and improves the scaling
with $\varepsilon$ from \(\bigot{\varepsilon^{-4}}\) to \(\bigot{\varepsilon^{-2}}\) at
the expense of limiting the types of observables that can be treated
efficiently (i.e., without introducing a scaling polynomial in the Hilbert space
dimension).
A series of additional works have offered improvements and variations on both
the randomized measurement approach~\cite{Zhao2021-fv, Chen2021-cq,
  Hadfield2020-lg}, and the more general approaches based on gentle
measurements~\cite{Brandao2017-gl,Van_Apeldoorn2019-xl, Huang2021-pr}.

In this work, we are primarily concerned with the high-precision regime, and aim
to achieve \(\bigot{\varepsilon^{-1}}\) scaling.
It is natural to ask whether it is possible to simultaneously achieve an
asymptotic cost that is logarithmic in the number of observables $M$ and
linear (up to logarithmic factors) in \(\varepsilon^{-1}\) using our input
model, where the input state is unknown and accessed via a black-box state
preparation unitary.
In Corollary III in the main text, we point out that recent results preclude this, showing that the
desired \(\varepsilon^{-1}\) scaling in the precision necessarily comes at the cost
of a square root dependence on $M$ for certain collections of observables.
On the other hand, \citen{Van_Apeldoorn2021-sk} provides an example of a collection
of operators---namely, projectors onto orthogonal states---where this scaling is achievable.

\section{Applications}
\label{app:applications}

In this appendix, we apply our method to three illustrative cases and consider
the potential for asymptotic speedups over alternative approaches. As in the main text, we focus on quantifying the cost with respect to the number of state preparation oracle queries. We note that some of the approaches we compare against (namely, methods based on sampling and shadow tomography) are usable under a weaker access model, where copies of the state are provided rather than queries to the state preparation oracle.

A major application of these ideas is the measurement of the fermionic $k$-body
reduced density matrices ($k$-RDMs) of a particular pure state.
The \(k\)-RDM of a pure state \(\ket{\chi}\) with support on \(N\) fermionic modes is
a tensor specified by the \(N^{2k}\) ``matrix elements'' of the form $\bra{\chi}
  a^\dagger_{p_1}\cdots a^{\dagger}_{p_k} a_{p_{k+1}} \cdots
  a_{p_{2k}}\ket{\chi}$, where the \(2k\) indices \(p_j\) take values ranging over
the \(N\) modes.
The $1$- and $2$- RDMs are particularly important, being sufficient to determine
the expected energy of a state and many other properties of interest~\cite{Davidson2012-ci}.

The $\mathcal{O}(N^{2k})$ terms in
the \(k\)-RDM can be divided into $\Omega(N^{k})$ groups of $\mathcal{O}(N^{k})$
mutually commuting operators~\cite{bonet2020nearly}. This allows for each of the terms to be estimated to within a precision \(\varepsilon\) by a simple sampling procedure with a query complexity of $\bigot{N^{k}/\varepsilon^{2}}$~\cite{Zhao2021-fv}.
The asymptotic scaling with $\varepsilon$ can be quadratically improved
(see~\app{prior_estimation_work}) by applying amplitude estimation to learn each
component individually with the requisite error, leading to a query complexity of
$\bigot{N^{2k}/\varepsilon}$.
Some shadow tomography protocols scale better with respect to \(N\) and \(k\) than either of these approaches, at the expense of scaling
with \(\varepsilon^{-4}\)~\cite{Huang2021-pr, Aaronson2020-ei, Brandao2017-gl,
  Van_Apeldoorn2019-xl}.
In particular, the technique described in \citen{Huang2021-pr} can estimate the
\(k\)-RDM at a cost that scales as \(\bigo{k \log(N) / \varepsilon^4}\) by estimating the expectation values of all degree \(2k\) majorana operators under the Jordan-Wigner transformation to within a precision \(\varepsilon\) and using these values to reconstruct the \(k\)-RDM.
Our gradient-based algorithm's scaling of \(\bigot{N^k/\varepsilon}\) for this
application follows directly from Theorem 1 in the main text.
In terms of the number of state preparation oracle queries, our method thus
strictly improves upon sampling and amplitude estimation for this application,
and provides an asymptotic advantage over all prior approaches for learning
the \(k\)-RDM when \(\varepsilon \in o(N^{-k/3})\).
This might naturally occur when we are interested in obtaining estimates of some
extensive quantities to within a fixed precision by summing estimates of \(O(N)\) local
observables.

For the case where we wish to apply our ideas to compute the expectation values
of $M$ mutually commuting observables with respect to a state $\ket{\psi}$, the
potential benefit still exists, but the trade-offs are less favorable. In this
case, commutation allows every expectation value to be measured on a single copy
of $\ket{\psi}$. Sampling then yields $\bigot{\log (M)/\varepsilon^2}$ scaling,
where the logarithmic dependence on \(M\) arises from the application of
concentration inequalities together with the union bound to guarantee that all
of the \(M\) expectation values are estimated to within \(\varepsilon\)
simultaneously (with some constant probability of success). This is in contrast
to the $\bigot{\sqrt{M}/\varepsilon}$ scaling of Theorem 1 in the main text. This does not
contradict the optimality of our approach in the high-precision regime, due to
the requirement that $\varepsilon < 1/(3\sqrt{M})$
in Theorem 2 and Corollary 3 in the main text. Inside this region of
applicability, sampling has at best the same scaling as our algorithm. The
trade-offs between sampling and quantum gradient approaches, as well as
algorithms that blend the two, are discussed in~\app{time-space-trade-offs}.

Another simple case to consider is one where we wish to measure \(M\)
observables that all fail to commute.
Then it is clear that our approach offers an unconditional speedup when compared
with approaches based on sampling and amplitude estimation (in terms of the
number of state preparation oracle queries).
It is still possible in this case to obtain a better scaling with respect to \(M\)
by employing shadow
tomography~\cite{Aaronson2020-ei, Brandao2017-gl, Van_Apeldoorn2019-xl,
  Huang2021-pr}.
We provided a brief discussion of these approaches in
\app{prior_estimation_work}.
To summarize, these strategies require a number of state preparation oracle
calls (actually, copies of the state) that scale logarithmically (or
poly-logarithmically) with \(M\) at the cost of scaling with \(\varepsilon^{-4}\).
When \(\varepsilon = o(M^{-1/6})\) our approach has a favorable asymptotic
scaling in terms of the state preparation oracle query complexity.

We note that these shadow tomography protocols are limited in other ways that
may ultimately lead to a gate complexity advantage for our method for particular
applications, even when the query complexity advantage is absent.
Some such protocols, such as the one in \citen{Aaronson2020-ei}, have gate
complexities that scale exponentially in one or more relevant parameters.
Others, such as the one in \citen{Huang2021-pr}, are computationally efficient
but only apply to limited sets of operators (Pauli observables in the case of~\citen{Huang2021-pr}).
The proposal of \citen{Brandao2017-gl} avoids both of these obstacles but
returns a representation of the state in terms of a quantum circuit that must
itself be prepared and measured to obtain the expectation values of interest.
In some cases, it might be fruitful to apply our measurement techniques to the
state whose preparation circuit is learned by this latter proposal.

\section{Further background on Gily\'{e}n et al.'s quantum algorithm for gradient estimation}
\label{app:more_gradient_details}
In this appendix, we restate a few of the important definitions used in the main
theorem which summarizes the performance of the quantum algorithm for the
gradient (Theorem 4 in the main text of this work, Theorem 25 in
\citen{Gilyen2017-gk}).
We also point out some of the details of the gradient algorithm relevant to our
consideration of costs beyond the phase oracle complexity that are not included
in the main statement of the theorem.
We refer the interested reader to \citen{Gilyen2017-gk} for a rigorous analysis
and further information about the implementation.

In Theorem 4 of the main text, we referred to the probability oracle
access model for \(f\).
We recall this definition here, as well as the definitions for the phase oracle access model from \citen{Gilyen2017-gk} and the binary
oracle access model used in Jordan's original gradient
algorithm~\cite{Jordan2005-hs}.
\begin{definition}[Probability oracle]
  \label{def:probability_oracle}
  Consider a function \(f: \mathbb{R}^M \rightarrow [0, 1]\).
  A probability oracle \(U_f\) for \(f\) is a unitary operator that acts
  as
  \begin{equation} \label{eq:U_f}
    U_f: \ket{\bm{x}}\ket{\bm{0}} \mapsto
    \ket{\bm{x}}\left(\sqrt{f(\bm{x})}  \ket{1} \ket{\phi_1(\bm{x})} + \sqrt{1 - f(\bm{x})}\ket{0}\ket{\phi_0(\bm{x})}\right),
  \end{equation}
  where \(\ket{\bm{x}}\) denotes a discretization of the variable \(\bm{x}\)
  encoded into a register of qubits, \(\ket{\bm{0}}\) denotes the all-zeros state of a register of ancilla qubits, and \(\ket{\phi_0(\bm{x})}\) and \(\ket{\phi_1(\bm{x})}\) are
  arbitrary quantum states.
\end{definition}
\begin{definition}[Phase oracle]
  \label{def:phase_oracle}
  Consider a function \(f: \mathbb{R}^M \rightarrow \mathbb{R}\).
  A phase oracle \(A_f\) for \(f\) is a unitary operator that acts as
  \begin{equation}
    A_f: \ket{\bm{x}}\ket{\bm{0}} \mapsto e^{i f(\bm{x})} \ket{\bm{x}}\ket{\bm{0}},
  \end{equation}
  where \(\ket{\bm{x}}\) denotes a discretization of the variable \(\bm{x}\)
  encoded into a register of qubits and \(\ket{\bm{0}}\) denotes the all-zeros state of a register of ancilla qubits.
\end{definition}
\begin{definition}[Binary oracle]
  \label{def:binary_oracle}
  Consider a function \(f: \mathbb{R}^M \rightarrow \mathbb{R}\).
  For some precision parameter \(\varepsilon>0 \), an
  \(\varepsilon\)-accurate binary oracle \(B_f\) for \(f(\bm{x})\) is a unitary
  operator that acts as
  \begin{equation}
    B_f: \ket{\bm{x}}\ket{\bm{0}} \mapsto \ket{\bm{x}}|\widetilde{f}(\bm{x})\rangle,
  \end{equation}
  where \(\ket{\bm{x}}\) denotes a discretization of the variable \(\bm{x}\)
  encoded into a register of qubits, \(\ket{\bm{0}}\) denotes the all-zeros state of a register of ancilla qubits, and \(\widetilde{f}(\bm{x})\) denotes a fixed-point
  binary number such that \(|f(\bm{x}) -\widetilde{f}(\bm{x})| \leq \varepsilon\).
\end{definition}

While we referred to the probability oracle access model in our Theorem 4 (in
the main text), the original Theorem~25 of
\citen{Gilyen2017-gk} described the gradient algorithm purely in terms of the
phase oracle access model. However, they also show how we can efficiently obtain a
probability oracle from a phase oracle.
\begin{theorem}[Theorem 14, \citen{Gilyen2017-gk}]
  \label{thm:oracle_conversion}
  Let \(U_f\) be a probability oracle for a function \(f(\bm{x})\).
  For any \(\varepsilon \in (0, \frac{1}{3})\), we can implement an
  \(\varepsilon\)-approximate phase oracle \(\widetilde{A}_f\) such that
  \begin{equation}
    \label{eq:approximate_phase_oracle}
    \| \widetilde{A}_f\ket{\psi}\ket{\bm{0}} - A_f\ket{\psi}\ket{\bm{0}}  \| \leq \varepsilon
  \end{equation}
  for all input states \(\ket{\psi}\), where \(A_f\) denotes an exact phase oracle
  for \(f\).
  This implementation uses \(\bigo{\log({1}/{\varepsilon})}\) invocations
  of the probability oracle \(U_f\) (or its inverse) and \(\bigo{\log
    \log({1}/{\varepsilon})}\) additional ancilla qubits beyond those required by
  \(U_f\).
\end{theorem}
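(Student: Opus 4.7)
The plan is to view $U_f$ as a block encoding of the diagonal operator $D_{\sqrt{f}} \defeq \sum_{\bm{x}} \sqrt{f(\bm{x})}\,\ket{\bm{x}}\!\bra{\bm{x}}$ on the $\bm{x}$-register and then apply the quantum singular value transformation (QSVT) to push this into an approximate phase oracle $\widetilde{A}_f$. First I would make the encoding explicit: writing $U_f = \sum_{\bm{x}}\ketbra{\bm{x}}\otimes V_{\bm{x}}$ where $V_{\bm{x}}\ket{0,\bm{0}} = \sqrt{f(\bm{x})}\ket{1,\phi_1(\bm{x})} + \sqrt{1-f(\bm{x})}\ket{0,\phi_0(\bm{x})}$, the projection $\bigl(I\otimes\bra{1}\otimes I\bigr)U_f\bigl(I\otimes\ket{0,\bm{0}}\bigr)$ realizes a standard projected-unitary encoding of $D_{\sqrt{f}}$ in each $\bm{x}$-sector, with the $\ket{\phi_1(\bm{x})}$ register playing the role of an isometric extension. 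This costs a single call to $U_f$.

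Second, I would construct a polynomial approximation to the target map $u \mapsto e^{iu^{2}}$ on $[0,1]$. Because this function is analytic on a compact real interval, its truncated Chebyshev series converges geometrically, so a polynomial $P(u)$ of degree $d=\mathcal{O}(\log(1/\varepsilon))$ satisfies $\sup_{u\in[0,1]}|P(u)-e^{iu^{2}}|\leq\varepsilon$. I would split $P$ into its real and imaginary parts, each of definite parity in $u$, implement each by a QSVT phase sequence acting on the block encoding from Step~1, and recombine them by a one-qubit linear combination of unitaries. The composite circuit is an $\varepsilon$-approximate block encoding of the diagonal unitary $D_{e^{if}}\defeq\sum_{\bm{x}}e^{if(\bm{x})}\ket{\bm{x}}\!\bra{\bm{x}}$. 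Because the target is itself unitary, the encoding collapses cleanly on any input $\ket{\bm{x}}\ket{\bm{0}}$, producing a state within $\varepsilon$ of $e^{if(\bm{x})}\ket{\bm{x}}\ket{\bm{0}}$, which is exactly the action of $\widetilde{A}_f$.

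The resource count is then immediate: each QSVT rotation consumes one call to $U_f$ or $U_f^{\dagger}$, giving $\mathcal{O}(\log(1/\varepsilon))$ queries in total, while the extra workspace is the single LCU selector qubit plus the $\mathcal{O}(\log\log(1/\varepsilon))$ qubits needed to store the classically precomputed QSVT phase angles to sufficient precision. The hard part will be propagating the approximation error uniformly in $\bm{x}$ through the QSVT and LCU compositions without inflating the ancilla count beyond the advertised $\mathcal{O}(\log\log(1/\varepsilon))$ bound. A cleaner backup route that sidesteps the complex-polynomial decomposition is the two-stage construction: first apply a degree-two QSVT with $P_{0}(u)=u^{2}$ to obtain a block encoding of $D_{f}$ itself, then invoke the standard $\mathcal{O}(t+\log(1/\varepsilon))$-query Hamiltonian-simulation subroutine for block-encoded Hermitian operators at $t=1$, which achieves the same asymptotic query and ancilla complexity with essentially the same error analysis.
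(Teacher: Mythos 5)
First, note that the paper does not prove this statement itself: it is imported verbatim as Theorem~14 of Gily\'en \emph{et al.}, whose proof runs a Grover-type iterate built from $U_f$, a reflection, and $U_f^\dagger$ --- on each $\bm{x}$-sector this iterate acts as a rotation by $2\theta_{\bm{x}}$ with $\sin^2\theta_{\bm{x}} = f(\bm{x})$ --- and then implements a truncated Jacobi--Anger expansion of $e^{i\sin^2\theta}$ as a linear combination of $\mathcal{O}(\log(1/\varepsilon))$ powers of that iterate; the $\mathcal{O}(\log\log(1/\varepsilon))$ ancillas are the LCU index register over those powers. Your QSVT route --- treating $\tilde\Pi U_f \Pi$ with $\Pi = I\otimes\ket{0,\bm{0}}\!\bra{0,\bm{0}}$ and $\tilde\Pi = I\otimes\ket{1}\!\bra{1}\otimes I$ as a projected-unitary encoding with singular values $\sqrt{f(\bm{x})}$, and applying an even polynomial approximating $u\mapsto e^{iu^2}$ --- is a legitimately different, more modern packaging of the same idea, and the degree bound $d=\mathcal{O}(\log(1/\varepsilon))$ is correct since $e^{iu^2}$ is entire. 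The $\bm{x}$-dependence of $\ket{\phi_1(\bm{x})}$ is handled correctly by your observation that only the fixed projectors matter, and even parity keeps the output in the $\ket{\bm{x}}\ket{0,\bm{0}}$ subspace.

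Two points need repair. First, after the LCU recombination of the real and imaginary parts you obtain a block encoding of (approximately) $e^{if(\bm{x})}$ with subnormalization at best $1/2$, so the circuit does \emph{not} ``collapse cleanly'': acting on $\ket{\bm{x}}\ket{\bm{0}}$ it produces $\tfrac{1}{2}e^{if(\bm{x})}\ket{\bm{x}}\ket{\bm{0}} + (\text{orthogonal garbage})$, which is a constant distance from the phase-oracle output, not $\varepsilon$-close. Because the encoded operator is $\varepsilon$-close to a unitary, one round of oblivious amplitude amplification restores normalization at the cost of a constant factor in queries; this step must be stated, since without it the theorem's conclusion fails. Second, the justification for the ancilla count is off: QSVT phase angles are classical parameters of single-qubit rotations and occupy no qubits. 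In your construction the overhead is actually $\mathcal{O}(1)$ ancillas (one QSVT signaling qubit plus one LCU selector), which is consistent with --- indeed better than --- the stated $\mathcal{O}(\log\log(1/\varepsilon))$ bound, but for the reason that you avoid the length-$\mathcal{O}(\log(1/\varepsilon))$ LCU index register of the original proof, not because angles are ``stored to sufficient precision.'' Your backup route (square to get $D_f$, then Hamiltonian-simulate the block-encoded diagonal operator at $t=1$) is sound and is in fact the closest in spirit to what Gily\'en \emph{et al.} actually do.
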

Therefore, we can regard calls to a phase oracle for \(f\) as equivalent to
calls to a probability oracle for \(f\) up to logarithmic factors.
More technically, the actual implementation of the gradient algorithm requires
the use of a modified phase oracle known as a fractional query phase oracle.
A fractional query phase oracle is defined in the same way as the phase oracle
of \defi{phase_oracle}, except that it has an additional parameter \(s \in [-1,
  1]\) which rescales the argument of the exponential.
Fortunately, \citen{Gilyen2017-gk} explains how a fractional phase oracle can
be naturally arrived at from a probability oracle in a theorem closely related
to \thm{oracle_conversion} (essentially by applying a rotation to rescale the amplitude $\sqrt{f(\bm{x})}$ of $\ket{1}$ in Eq.~\eqref{eq:U_f} to \(\sqrt{sf(\bm{x})}\), before converting to a phase oracle).

The statement of Theorem~1 in the main text gives the cost of the gradient estimation algorithm in terms of
the number of calls to a probability oracle. Similarly, for our expectation value estimation algorithm, we mainly focus on quantifying
the cost in terms of oracle complexity (in particular, the number of queries to the state
preparation oracle).
However, we also wish to describe some of the secondary costs that we encounter
in our algorithm, so it is useful to note a few additional
details about the gradient estimation algorithm.

A secondary cost we might consider is the amount of time evolution
required for each observable.
From the proof of Theorem 25 in \citen{Gilyen2017-gk}, we know that the phase
oracle is queried at uniform superpositions of points within a series of
\(M\)-dimensional boxes and that the largest such box has a side length of
\begin{equation}
  x_{\max} \defeq rm,
\end{equation}
where \(r\) is defined implicitly by the equation
\begin{equation}
  \label{eq:r_def}
  r^{-1} \defeq 9cm \sqrt{M} (81 \cdot 8 \cdot 42\pi c m \sqrt{M}/\varepsilon)^{1/(2m)},
\end{equation}
and \(m\) is a positive integer,
\begin{equation}
  m \defeq \big \lceil \log(c \sqrt{M} / \varepsilon) \big \rceil.
\end{equation}
Here, \(c\) is the same fixed constant as that introduced in Theorem 4 in the main text for bounding the partial derivatives of \(f\).
As a consequence of these expressions, we have that the largest side length
shrinks as \(M\) increases.
Specifically, \(x_{\max} = \bigo{1/\sqrt{M}}\).
Calls to the phase oracle are generated using calls to the probability oracle
and its inverse over the same input parameters (see the proof of Theorem 14 in
\citen{Gilyen2017-gk}, stated above as \thm{oracle_conversion} for convenience)
and the box size for the probability oracle directly determines the amount of
time evolution by each observable.
Therefore, for each probability oracle query, we require at most
\(\bigo{1/\sqrt{M}}\) units of time evolution by each observable.

The other substantial secondary cost to consider is the space complexity.
The probability oracle \(U_f\) for our expectation value algorithm can be
implemented using \(N + 1 + Mn\) qubits, where \(n =
\bigo{\log(1/\varepsilon)}\).
The additive cost of \(N + 1\) comes from the system register and the ancilla
for the Hadamard test.
The \(Mn\) terms is due to the need for \(M\) \(n\)-bit ancilla registers to
prepare a superposition over states indexing the \(2^{nM}\) points in the
hypercube \(G_n^M\).
To be specific, this hypercube is composed of the Cartesian product of \(M\)
copies of the set \(G_n\), defined as
\begin{equation}
  \label{eq:gndef}
  G_n \defeq \left\{ j/2^n - 1/2 + 1/2^{n+1} : j \in \left\{ 0, 1,
  ..., 2^{n} - 1\right\} \right\}.
\end{equation}
The logarithmic scaling of \(n\) with $1/\varepsilon$ comes from the precision
requirements of the gradient algorithm (see the definition of \(n\) in Theorem
21 in \citen{Gilyen2017-gk} and note that the factors of \(\ell \in \{-m, -m +
1, ..., m\}\) that appear in the argument of the oracle in Theorem 25 can be
accounted for by compiling a family of \(2m + 1\) related phase/probability
oracles on grids of varying size).
The conversion from a probability oracle to an \(\varepsilon'\)-approximate phase oracle requires only \(\bigo{\log\log(1/\varepsilon')}\)
additional ancilla.
The gradient algorithm nominally requires storing \(\bigo{\log(M/\delta)}\)
copies of each value and performing a coherent median finding step, but this can
be performed classically for our purposes, eliminating the need for an additional
multiplicative factor of \(\bigo{\log M}\) in the number of qubits.
We therefore have that the space complexity for our estimation algorithm is \(N
+ 1 + Mn + \bigo{\log\log(Q/\varepsilon)}\), where \(n =
\bigo{\log(1/\varepsilon)}\) (see above) and \(Q = \bigot{\sqrt{M}/\varepsilon}\) is
the number of queries we make to the phase oracle.

The oracle, gate, and qubit complexities of our algorithm for estimating the expectation values of a general collection of observables (with arbitrary norms) are analyzed in \app{arbitrary_norms}; see SI Theorem~\ref{thm:ub_general} for an explicit statement.

\section{Additional details regarding gradient-based expectation value estimation}
\label{app:details}

In the main text, we claimed that the circuit \(F(\bm{x})\), defined in Equation 11 as
\begin{equation}
  \label{eq:param_circuit_def_app}
  F(\bm{x}) \defeq
  \big(H \otimes \idmat \big) \big( \controlled{U(\bm{x})} \big)
  \big(S^\dagger H \otimes U_\psi\big),
\end{equation}
encodes the function
\begin{equation}
  \label{eq:encodes_the_function}
  f(\bm{x}) = -\frac{1}{2}{\rm Im}\Bigg[\ev{\prod_{j=1}^M e^{-2 i x_j O_j}}{\psi}\Bigg] + \frac{1}{2}
\end{equation}
in the following way:
\begin{equation}
  \label{eq:amplitude_encode_app}
  F(\bm{x})\ket{0}\otimes\ket{\bm{0}} = \sqrt{f(\bm{x})}\ket{1}\otimes \ket{\phi_1(\bm{x})} + \sqrt{1 - f(\bm{x})}\ket{0}\otimes \ket{\phi_0(\bm{x})},
\end{equation}
where \(\ket{\phi_0(\bm{x})}\) and \(\ket{\phi_1(\bm{x})}\) are some arbitrary normalized quantum states.
While checking this identity directly is burdensome, it is easy to verify its correctness by observing that \(F(\bm{x})\) is the circuit that performs the the Hadamard test for the imaginary component of \(\ev{\prod_{j=1}^M e^{-2 i x_j O_j}}{\psi}\)~\cite{Yu_Kitaev1995-zv, Aharonov2009-nk}. That is, the expectation value of the Pauli \(Z\) operator on the ancilla qubit with respect to the state \(F(\bm{x})\ket{0}\otimes\ket{\bm{0}}\) is, by construction, equal to \({\rm Im}\Big[\ev{\prod_{j=1}^M e^{-2 i x_j O_j}}{\psi}\Big]\).

Let \(c_0\) and \(c_1\) be defined implicitly by the following expression,
\begin{equation}
  F(\bm{x})\ket{0}\otimes\ket{\bm{0}} = c_1\ket{1}\otimes \ket{\phi_1(\bm{x})} + c_0\ket{0}\otimes \ket{\phi_0(\bm{x})}.
\end{equation}
Our observation that the circuit from \eq{param_circuit_def_app} performs the Hadamard test for the imaginary component of \(\ev{\prod_{j=1}^M e^{-2 i x_j O_j}}{\psi}\) implies that
\begin{equation}
  |c_0|^2 - |c_1|^2 = {\rm Im}\Big[\ev{\prod_{j=1}^M e^{-2 i x_j O_j}}{\psi}\Big], \;\;\;\; |c_0|^2 + |c_1|^2 = 1.
\end{equation}
It is then straightforward to verify that \eq{encodes_the_function} and \eq{amplitude_encode_app} follow from \eq{param_circuit_def_app} by solving for \(c_0\) and \(c_1\) (absorbing the phases into the definitions of the arbitrary states \(\ket{\phi_0(\bm{x})}\) and \(\ket{\phi_1(\bm{x})}\)).

\section{Estimating dynamic correlation functions}
\label{app:dynamic}

In the main text, we considered the problem of the expectation values of multiple observables
with respect to a given pure state.
Our gradient-based approach can also be naturally applied to estimate other properties of
interest.
As a concrete example, we can use it to evaluate a
collection of two-point dynamic correlation functions.
Specifically, we consider functions of the form
\begin{equation}
  \label{eq:dynamic_correlation_func_app}
  C_{A, B}(t) \defeq \ev{U(0,t) A^\dagger U(t, 0) B}{\psi},
\end{equation}
where \(A\) and \(B\) are some operators of interest, and \(U(t, t')\) is
the time-evolution operator that maps a state at time \(t'\) to a state a time
\(t\).
Quantum algorithms for measuring individual dynamic correlation functions are
well known~\cite{Somma2002-wq, Bauer2016-hy}.
These quantities are useful when comparing with the direct outcomes of
spectroscopic experiments~\cite{Damascelli2004-yk}, and in the design of hybrid
quantum-classical methods based on dynamical mean field
theory~\cite{Bauer2016-hy, Georges1992-qf, Kotliar2006-rf}.

In order to proceed, we introduce some assumptions and notation.
To simplify the presentation and comparison with other algorithms, we assume
that \(A_1,\dots, A_M\) and \(B\) are operators that are both Hermitian and unitary, although the unitarity
condition is not required by our gradient-based approach.
More general \(A_j\)'s and \(B\) can be treated with a variety of methods, the most
simple of which is decomposing them into a linear combination of suitable
operators.
Let \(\{C_{A_1, B}(t_1), C_{A_2, B}(t_2), \cdots, C_{A_M, B}(t_M)\}\)
denote a collection of correlation functions we would like to evaluate.
Without loss of generality, we can assume that the time points are indexed in
nondecreasing order (\(t_1 \leq t_2 \leq \cdots \leq t_M\)).

Before describing our improved approach to estimating these quantities, we
briefly consider the cost of estimating them using standard amplitude
estimation-based techniques.
We quantify the cost in terms of two resources, calls to a unitary state
preparation oracle (and its inverse) for \(\ket{\psi}\), and the total amount of
time evolution under the system Hamiltonian.
We assume that the costs of applying \(A_j\) (for all \(j\)) and \(B\) to some
state, as well as performing \(\bigo{1}\) units of time evolution by these
operators, are all negligible.
This is particularly reasonable for dynamic correlation functions, where \(B\)
and the \(A_j\)'s are frequently some simple local operators.
As with the case of expectation value estimation, we are interested in the cost
(up to logarithmic factors) required to estimate each quantity to within some
additive error \(\varepsilon\) with a success probability of at least \(2/3\).
The naive approach is to use amplitude estimation to evaluate each quantity
separately, resulting in a resource cost of
\begin{equation}
  C_\psi = \widetilde{\mathcal{O}}\left(\frac{M}{\varepsilon}\right)
\end{equation}
calls to the state preparation oracle \(U_\psi\) and it's inverse, along with
\begin{equation}
  C_H = \mathcal{O}\left({\frac{\log M}{\varepsilon}}\sum_{j=1}^{M} t_j\right)
\end{equation}
units of time evolution under the system Hamiltonian.

Our alternative approach achieves an unconditional advantage in the number of
state preparation calls and may achieve an advantage with respect to the total
duration of time evolution, depending on the choice of the time points
\(t_j\).
For convenience, we define \(t_0 \defeq 0\).
We proceed as in the expectation value estimation case, constructing a
parameterized unitary for use with the Hadamard test,
\begin{equation}
  U(\bm{x}) \defeq \Bigg( \prod_{j=1}^M U(t_{j-1}, t_j) e^{- 2i x_j A_j} \Bigg) U(t_M, t_0) B.
\end{equation}
Taking the derivative of \(U(\bm{x})\) with respect to \(x_\ell\) and evaluating the resulting expression at
\(\bm{x} = \bm{0}\), we have
\begin{equation}
  \frac{\partial U(\bm{x})}{\partial x_\ell} \Bigg|_{\bm{x} = \bm{0}} =
  -2i \,U(t_{0}, t_\ell) A_\ell U(t_\ell, t_0) B.
\end{equation}
From this we can see that the \(M\) different partial derivatives of
\(U(\bm{x})\) with respect to \(x_i\) are the \(M\) different unitary
operators whose matrix elements we would like to estimate.
Just as we did in Equation 11 in the main text, we can apply a Hadamard test to
\(U(\bm{x})\).
We can then add quantum controls to the rotation angles \(\bm{x}\) to
construct a probability oracle for a function whose gradient yields the real
parts of the matrix elements of interest.
We could likewise use the Hadamard test for the real component of
\(U(\bm{x})\) to obtain the imaginary components of the matrix elements of
interest.
It is simple to show that the resulting functions satisfy the technical
conditions of Theorem 4 from the main text.
We can then apply the quantum algorithm for the gradient.

We analyze the asymptotic scaling of this approach.
Each application of the Hadamard test circuit for \(U(\bm{x})\) requires a
single call to the state preparation oracle $U_\psi$ and its inverse, plus \(2
\sum_{j=1}^M t_j \) units of time evolution.
We are interested in estimating \(M\) different quantities to within a precision
\(\varepsilon\), and so we require \(\bigot{{\sqrt{M}}/{\varepsilon}}\) calls to
our probability oracle.
Therefore, we require
\begin{equation}
  C_\psi = \widetilde{\mathcal{O}}\left(\frac{\sqrt{M}}{\varepsilon}\right)
\end{equation}
calls to $U_\psi$ and $U_\psi^\dagger$, along with
\begin{equation}
  C_H = \widetilde{\mathcal{O}}\left({\frac{\sqrt{M}t_M}{\varepsilon}}\right)
\end{equation}
units of time evolution under the system Hamiltonian.
Regardless of the chosen time points, the scaling in the number of state
preparation oracle calls is a factor of \(\sqrt{M}\) smaller than for a scheme
based on amplitude estimation.
If \(t_M = o({\sum_{j=1}^M t_j}/{\sqrt{M}})\), then using our
approach also scales more favorably in terms of the total time evolution.
For example, consider the case where the time points are evenly spaced in
increments of \(\Delta\).
Then our approach requires \(\bigot{{M^{1.5} \Delta}/{\varepsilon}}\) units of
time evolution, whereas the approach based on estimating each value  independently using amplitude estimation
requires \(\bigot{{M^2 \Delta^2}/{\varepsilon}}\) units.

\section{Trading off between state preparation, time evolution, and space}
\label{app:time-space-trade-offs}
We have proposed a strategy for estimating the expectation values of \(M\)
observables with respect to an \(N\)-qubit state \(\ket{\psi}\).
Neglecting logarithmic factors, our approach requires
\(\bigot{{\sqrt{M}}/{\varepsilon}}\) sequential calls to the state preparation
oracle for \(\ket{\psi}\) and \(\bigo{M\log(M/\varepsilon) + N}\) qubits to
estimate each expectation value to within error $\varepsilon$ with probability at
least \(2/3\).
It also requires implementing controlled time evolution gates for each
observable $O_j$; these are of the form $e^{-ix O_j}$ for various times
$x$, and the total time is $\widetilde{O}(1/\varepsilon)$ for each $O_j$.
Treating all of the observables as equivalent, we can say that the algorithm
requires a total of \(\bigot{M/\varepsilon}\) units of (controlled) time evolution
overall.

Note that applying our algorithm (or a strategy based on amplitude estimation)
to each observable separately would require
$\widetilde{\mathcal{O}}(M/\varepsilon)$ state preparation queries, but only
$\mathcal{O}(N + \log(1/\varepsilon))$ qubits would be needed.
In some contexts, we expect that the dominant cost will be that of implementing
of the state preparation oracle, in which case it would be advantageous to incur
the additional qubit overhead of $\widetilde{\mathcal{O}}(M)$, to reduce the
number of oracle queries by $\bigo{\sqrt{M}}$.
However, if space is also a limiting factor, we can interpolate between these
two extremes by dividing the observables into \(g\) groups of $M/C$ observables
each and applying our algorithm to each group separately.
Then, the total number of oracle calls required is \(\bigot{{\sqrt{gM
    }}/{\varepsilon}}\), while the number of qubits is reduced to \(\bigot{N +
{M}/{g}}\).
For \(g = \mathcal{O}(M)\), we recover the same asymptotic scaling (up to
logarithmic factors) in the query and the space complexity as the approach based
on applying our algorithm (or amplitude estimation) to each observable
separately.
The complexity with respect to the number of units of time evolution remains the
same regardless of the value of \(g\) (up to logarithmic factors).

\section{Grouping observables to trade off between gradient-based estimation and sampling}
\label{app:eps_1_2_trade-offs}

One way of combining our gradient-based expectation value estimation algorithm with other approaches is to
apply the gradient-based expectation value estimation to some observables and to
measure others by sampling.
The purpose of this section is to analyze this trade-off and to show that there
are regimes where we achieve an overall scaling with \(\varepsilon\) that is
between \(\varepsilon^{-1}\) and \(\varepsilon^{-2}\).
In particular, we will consider the situation where $K$ (the number of groups of
mutually commuting operators, see \app{prior_estimation_work},
\eq{standard_approach_cost}) and $M$ are considered to be a function of $\varepsilon$ and
then ask when and how we should trade off between using gradient-based,
Heisenberg limited, estimates versus the shot-noise limited grouping strategy.
Under most circumstances, one of these two strategies will dominate the other.
However, in situations where there are a very large number of potential groups, this reasoning can change.

\subsection{Exponentially shrinking group sizes}
Let us consider dividing the observables into two groups: one that we estimate
using our gradient-based approach and one that we estimate using naive sampling.
Let \(M_{G}\) denote the number of observables in the first group and \(K\) denote
the number of groups of mutually commuting observables in the second group.
Then we find that the cost is
\begin{equation}
  C =\bigot{\frac{\sqrt{M_{PE}(\varepsilon)}}{\varepsilon} +
    \frac{K(\varepsilon)}{\varepsilon^2}} = \bigot{\frac{\sqrt{M -
        \sum_{k=1}^{K(\varepsilon)} T_k}}{\varepsilon} +
    \frac{K(\varepsilon)}{\varepsilon^2}},
\end{equation}
where  \(T_k\) denotes the cardinality of the \(k\)th group of mutually
commuting observables.
As a particular case to consider the scaling, let us take $\alpha>0, \Gamma>0$
be values such that
\begin{equation}
  T_k \le \Gamma Me^{-\alpha k}/(1-e^{-\alpha}).
\end{equation}
This case corresponds to the situation where most of the terms in the
Hamiltonian commute, but the grouping procedure becomes less efficient as we
consider more groups until the number of terms per group is at most $1$.

Thus the cost under this assumption scales as
$$
  C = \bigot{\frac{\sqrt{M e^{-\alpha K}}}{\varepsilon} + \frac{K}{\varepsilon^2}}.
$$
In this assignment, we see that our algorithm that is based on gradient
estimation is favorable to statistical sampling when $\sqrt{M} \varepsilon \le 1$.
However, if this is not true, then it is possible to find a minima for this
function.
Specifically, differentiating the cost with respect to $K$ and setting the
result to zero yields
$$
  K = \frac{\ln\left(\frac{\alpha^2{M} \varepsilon^2}{4}\right)}{\alpha}$$
Substituting the result gives $C = \bigot{\log(M)/\varepsilon^2 \alpha}$.
This suggests that for the case of exponentially shrinking group sizes, the
asymptotic scaling is identical to that of the sampling method alone; whereas if
$\varepsilon \sqrt{M} \ll 1$, no positive optimal $K$ exists and the extreme value
theorem suggests that the scaling abruptly shifts to the $\bigot{\sqrt{M}/
    \varepsilon}$ scaling predicted by our gradient method.

\subsection{Polynomially shrinking group sizes}

These trade-offs become more visible in cases where the size of each group of
commuting terms shrinks polynomially with $k$.
Specifically, let us assume that for some $\alpha > 1$, $\Gamma>0$ and any $K\ge
  1$,

$$
  T_k \ge \Gamma\frac{M}{\alpha-1} k^{-\alpha}
$$

From this we have from the fact that $1/k^{\alpha}$ is a convex function of $k$
that
\begin{equation}
  \sum_{k=1}^K T_k \ge \frac{\Gamma M}{\alpha-1}\int_{1}^K k^{-\alpha} \mathrm{d}k = M(1-K^{1-\alpha}),
\end{equation}

which approaches $M$ as $K\rightarrow \infty$ for $\alpha>1$.
This implies that the total cost obeys
\begin{equation}
  C = \bigot{\frac{\sqrt{M-\sum_{k=1}^K T_k }}{\varepsilon} + \frac{K}{\varepsilon}} = \bigot{\frac{\sqrt{MK^{1-\alpha}}}{\varepsilon} + \frac{K}{\varepsilon}}
\end{equation}
The first term above shrinks monotonically with $K$; whereas the second grows
monotonically.
This suggests that if a local optima exists, then it is a minima for $C$.
This local optima can be found by differentiating to find the best $K$, which
leads to
\begin{equation}
  K= e^{- \frac{\ln\left(\frac{4}{Me^2 (\alpha -1)^2}\right)}{1+\alpha}}.
\end{equation}
Substituting this value into our expression for $C$ yields
\begin{equation}
  C = \bigot{\frac{M^{\frac{1}{1+\alpha}}}{\varepsilon^{2(\frac{\alpha}{1+\alpha})}}}
\end{equation}
This shows that, depending on the falloff rate of the cummulative sum of the
sizes of the groups of commuting terms, intermediate scaling between the
Heisenberg-limited scaling of the gradient-based algorithm and the shot noise
limited scaling of the grouping can be observed.
Specifically, shot noise scaling is optimal as $\alpha \rightarrow \infty$ and
Heisenberg limited scaling occurs as $\alpha \rightarrow 1$.

\section{Estimating expectation values of observables with arbitrary norms}
\label{app:arbitrary_norms}

Theorem 1 in the main text applies only to collections of observables with
spectral norms at most $1$. Consider now an arbitrary collection of observables
$\{O_j\}_j$, each with a possibly different upper bound $B_j$ on its spectral
norm, $\|O_j\| \leq B_j$. The proof of Theorem~1 can be straightforwardly
extended to produce an algorithm that estimates the expectation values of these
observables using $\widetilde{\mathcal{O}}(B_{\max} \sqrt{M}/\varepsilon)$
queries to the state preparation oracle $U_\psi$ and its inverse, where
$B_{\max} \coloneqq \max_j B_j$. This is suboptimal whenever the $B_j$ are not
all equal to $B_{\max}$. In this appendix, we provide an algorithm that scales
with the 2-norm $\sqrt{\sum_j B_j^2}$ of the $B_j$'s rather than
$B_{\max}\sqrt{M}$, thereby proving the following generalisation of
Theorem~1.

\begin{theorem} \label{thm:ub_general} Let $\{O_j\}_{j=1}^M$ be an arbitrary collection
  of $M$ Hermitian operators on $N$ qubits, and for each $j$, let $B_j$ be a
  known upper bound on the spectral norm of $O_j$: $\|O_j\| \leq B_j$.
  There exists a quantum algorithm that, for any $N$-qubit quantum state
  $\ket{\psi}$ prepared by a unitary oracle $U_\psi$, outputs estimates
  $\widetilde{o_j}$ such that $|\widetilde{o_j} -
    \bra{\psi}O_j\ket{\psi}| \leq \varepsilon$ for all $j$ with probability at
  least $1-\delta$, using \[ Q = \mathcal{O}\left(\frac{\bar{B}}{\eps}
    {\log^{3/2}\left(\frac{\bar{B}}{\eps}\right)} \log\log
    \left(\frac{\bar{B}}{\eps}\right)\log\left(\frac{M}{\delta}\right) \right)
    = \widetilde{\mathcal{O}}\left(\frac{\bar{B}}{\varepsilon}\right) \] queries
  to $U_\psi$ and $U_\psi^\dagger$, where $\bar{B} \coloneqq \sqrt{\sum_{j\in M}
      B_j^2}$.
  The algorithm also uses $\bigo{Q\log(B_i/\varepsilon)}$ gates of the form controlled-$e^{-it O_j}$ for each $j \in \{1,\dots, M\}$, for various values of $t$ with $|t| \in \mathcal{O}\left(\bar{B}^{-1}\sqrt{\log(\bar{B}/\varepsilon)}\right)$, as well as
  $\mathcal{O}\left(\log(M/\delta)\sum\limits_{j\in[M]}\log(B_j/\eps)\log\log(B_i/\eps)
    + Q\log(Q/\eps)\log\log(Q/\eps)\right)$ elementary gates, and
  $\mathcal{O}\left(N + \sum_{j \in [M]}\log(B_j/\varepsilon) +
    \log\log(Q/\eps)\right)$ qubits.
\end{theorem}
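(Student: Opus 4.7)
The plan is to follow the proof of Theorem~\ref{thm:ub} verbatim but to invoke a generalisation of Gily\'en et al.'s gradient algorithm in place of Theorem~\ref{thm:gradient_algorithm}. As in that proof, define
\[
  U(\bm x) \defeq \prod_{j=1}^{M} e^{-2ix_j O_j}, \qquad f(\bm x) \defeq -\tfrac{1}{2}\mathrm{Im}\bra{\psi}U(\bm x)\ket{\psi} + \tfrac{1}{2},
\]
so that $\partial f/\partial x_\ell\rvert_{\bm 0} = \bra{\psi}O_\ell\ket{\psi}$ exactly as in \eqref{eq:derivative_magic_evaluation}. Repeating the derivation of \eqref{eq:kth_derivative} while inserting the bound $\|O_j\|\leq B_j$ at each operator factor gives the product-form estimate $|\partial_{\bm\alpha} f(\bm 0)|\leq 2^{k-1}\prod_{i=1}^k B_{\alpha_i}$ for every multi-index $\bm\alpha$ of length $k$. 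This bound satisfies the hypothesis of Theorem~\ref{thm:gradient_algorithm} only if one takes $c=\Theta(B_{\max})$, which would yield the suboptimal complexity $\widetilde{\mathcal O}(B_{\max}\sqrt{M}/\varepsilon)$ whenever the $B_j$ are heterogeneous.

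To recover the claimed $\widetilde{\mathcal O}(\bar B/\varepsilon)$ scaling, I would prove an anisotropic gradient-estimation statement: given a probability oracle for a function $g$ whose derivatives satisfy $|\partial_{\bm\alpha} g(\bm 0)|\leq c^k\prod_i w_{\alpha_i}$, every component of $\nabla g(\bm 0)$ can be estimated to additive error $\varepsilon$ using $\widetilde{\mathcal O}(c\|\bm w\|_2\log(M/\delta)/\varepsilon)$ oracle calls. The intended implementation is Gily\'en et al.'s central-difference plus QFT construction on a uniform hypercube of side length $L = \Theta(\|\bm w\|_2^{-1}\sqrt{\log(\|\bm w\|_2/\varepsilon)})$; the crucial analytic observation is that although the quantity $L\sum_j w_j = L\|\bm w\|_1$ can be as large as $L\sqrt{M}\max_j w_j$, Cauchy--Schwarz applied to the relevant sums $\sum_j w_j x_j$ over such a cube keeps the Taylor remainder controlled by $cL\|\bm w\|_2$, so the analysis underlying Theorem~\ref{thm:gradient_algorithm} goes through with $c\sqrt{M}$ replaced by $c\|\bm w\|_2$. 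Applied to $f$ with $w_j = B_j$, this immediately produces the stated query count.

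The remainder of the argument reuses the ingredients from Theorem~\ref{thm:ub}: the probability oracle for $f$ is the Hadamard-test circuit \eqref{eq:param_circuit_def} with the rotations quantum-controlled as in \eqref{eq:U_f_def}, now using $n_j = \mathcal O(\log(B_j/\varepsilon))$ precision bits in direction $j$ so that each call implements controlled-$e^{-itO_j}$ gates with $|t|\in\mathcal O(\bar B^{-1}\sqrt{\log(\bar B/\varepsilon)})$. Converting between probability and fractional phase oracles via Theorem~\ref{thm:oracle_conversion} and taking a classical median over $\mathcal O(\log(M/\delta))$ independent runs followed by a union bound give the $1-\delta$ success probability. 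The main obstacle is the anisotropic error analysis for the generalised gradient algorithm---in particular, showing that Gily\'en et al.'s higher-order central-difference truncation error behaves like $c\|\bm w\|_2$ rather than $c\sqrt{M}\max_j w_j$ under the product-form derivative bound, and that the phase-kickback QFT still separates the $M$ components cleanly on the rescaled cube. Once that analytic step is established, the gate, qubit, and controlled-evolution counts in Theorem~\ref{thm:ub_general} follow by routine bookkeeping: summing $n_j = \mathcal O(\log(B_j/\varepsilon))$ for the index registers, adding $N+1$ for the system and the Hadamard-test ancilla, and multiplying the per-call gate and evolution counts by the total query budget $Q$.
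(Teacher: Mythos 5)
Your overall architecture matches the paper's: the same function $f(\bm{x}) = -\tfrac{1}{2}\mathrm{Im}\bra{\psi}\prod_j e^{-2ix_jO_j}\ket{\psi}+\tfrac{1}{2}$, the product-form derivative bound $|\partial_{\bm\alpha}f(\bm 0)|\leq \prod_i (2B_{\alpha_i})$, an anisotropic generalization of Gily\'en et al.'s gradient algorithm, per-direction precision $n_j=\mathcal{O}(\log(B_j/\varepsilon))$, and a classical median over $\mathcal{O}(\log(M/\delta))$ repetitions. The paper does exactly this, replacing the uniform hypercube $G_n^M$ by a hyper-rectangular grid $G_{n_1}\times\dots\times G_{n_M}$ and assuming $|\partial_{\bm\alpha}f(\bm 0)|\leq \bm{z}^{\bm\alpha}$ with $\bm{z}=2(B_1,\dots,B_M)$.

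However, the step you correctly identify as the crux --- showing the central-difference truncation error scales with $\|\bm{w}\|_2$ rather than $\sqrt{M}\max_j w_j$ --- cannot be closed by Cauchy--Schwarz as you propose. For $\bm{x}$ ranging over a cube of side $L$, Cauchy--Schwarz gives $|\sum_j w_j x_j|\leq \|\bm{w}\|_2\|\bm{x}\|_2\leq \tfrac{L}{2}\sqrt{M}\,\|\bm{w}\|_2$, which reintroduces precisely the $\sqrt{M}$ factor you are trying to eliminate (and the worst-case bound $\tfrac{L}{2}\|\bm{w}\|_1$ is no better, since $\|\bm{w}\|_1$ can be as large as $\sqrt{M}\|\bm{w}\|_2$). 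The correct tool is a concentration-of-measure argument: for $\bm{y}$ drawn uniformly from the grid, the components $y_i$ are i.i.d.\ symmetric and bounded, so Hoeffding's inequality gives $\E[(\sum_i y_iz_i)^{2k}]\leq 2(\|\bm{z}\|^2k/2)^k$, and Markov's inequality then bounds $|\sum_{\bm\alpha}\bm{y}^{\bm\alpha}\partial_{\bm\alpha}f(\bm 0)|\leq \sqrt{2}(4\|\bm{z}\|\sqrt{k/2})^k$ for all but a $4^{-2k}$ fraction of grid points. This ``for all but a small fraction of points'' structure is not incidental --- it is exactly the form of hypothesis that the gradient-algorithm lemmas (the analogues of Lemmas~20 and~21 of Gily\'en et al.) accept, and it mirrors how the original analysis improves the worst-case $cM$ to the typical-case $c\sqrt{M}$. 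A deterministic pointwise bound of the kind Cauchy--Schwarz would supply simply does not exist at the claimed strength. A secondary omission: when the $g_i$ are no longer uniformly bounded by a constant, the phase-estimation readout in each register can wrap around; one must additionally require $N_i>2S|g_i|$ (equivalently, choose $n_i$ large enough relative to $z_i$), which is the other reason the per-direction register sizes enter the analysis and not merely a bookkeeping convenience.
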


Like the algorithm of Theorem~1 in the main text, a central idea is to encode the
expectation values in the gradient of the
function \begin{equation} \label{specialf} f(\bm{x}) \coloneqq
  -\frac{1}{2}\mathrm{Im} \Big[\bra{\psi} \prod_{j=1}^M e^{-2ix_j
        O_j}\ket{\psi}\Big] + \frac{1}{2}, \end{equation} a probability oracle for
which can be implemented as described in the main text.
However, the gradient estimation algorithm of \citen{Gilyen2017-gk} requires a
uniform upper bound on the gradient components, in order to guarantee the same
additive error $\varepsilon$ for each component.
This would lead to the suboptimal $\widetilde{\mathcal{O}}(B_{\max} \sqrt{M}/\varepsilon)$
scaling.
Therefore, to prove SI Theorem~\ref{thm:ub_general}, we must first generalize the
gradient estimation algorithm of \citen{Gilyen2017-gk} to allow for non-uniform
bounds on the gradient components.
We then analyze a different condition on the higher-order derivatives (from that
of Theorem~4 in the main text) that is more directly relevant to our
function $f$.

\subsection{generalized gradient estimation algorithm}

We recall some notation from \citen{Gilyen2017-gk}. For any $n \in \mathbb{N}$, they define the one-dimensional grid
$G_n \coloneqq \left\{{j}/{2^n} - {1}/{2} + {1}/{2^{n+1}} : j \in \{0,\dots, 2^n - 1\} \right\}$,
and use $\ket{x}$ for $x \in G_n$ to implicitly denote the state storing the binary representation of the integer $j$ where $x = j/2^n - 1/2 + 1/2^{n+1}$. For a function $f: \mathbb{R}^M \to \mathbb{R}$, a phase oracle $O_f$ for $f$ is any unitary that acts as $\ket{\bm{x}} \to e^{if(\bm{x})}\ket{\bm{x}}$, where $\bm{x} = \ket{x_1}\dots \ket{x_M}$ for $\bm{x} \in G_n^M$.

Our main modification to Algorithm~2 of \citen{Gilyen2017-gk} (i.e., Jordan's
gradient estimation algorithm~\cite{Jordan2005-hs}) is to allocate a possibly
different number of qubits $n_i$ to each $\ket{x_i}$ register.
Then $\bm{x} = \ket{x_1}\dots\ket{x_M}$ encodes some $\bm{x} \in G_{n_1} \times
  \dots \times G_{n_M}$, and we let
\[ \mathcal{G} \coloneqq G_{n_1} \times \dots \times G_{n_M} \]
denote this hyper-rectangular grid.
We state the modified algorithm for completeness. For now, $n_i$ and the parameter $S$, which determines the number of (fractional) queries to the phase oracle,
are all free parameters, to be determined later.

\begin{algorithm}[H] \caption{gradient estimation algorithm with variable fixed-point precision for each gradient component\label{alg1}}
  \textbf{Input:} A function $h : \mathcal{G} \to \mathbb{R}$, accessed via a phase oracle $O_h$ that acts as $O_h \ket{\bm{x}} = e^{i h(\bm{x})} \ket{\bm{x}}$ for all $\bm{x} \in \mathcal{G}$.
  \begin{algorithmic}[1]
    \State Initalise $\ket{0}^{\otimes n_1} \otimes \dots \otimes \ket{0}^{\otimes n_M}$.
    \State Apply the Hadamard transform to all registers, i.e., apply $H^{\otimes n_1} \otimes \dots \otimes H^{\otimes n_M}$.
    \State Apply $O_h^{2\pi S}$.
    \State For each $i \in [M] \coloneqq \{1,\dots, M\}$, apply the inverse quantum Fourier transform $\mathrm{QFT}_{G_{n_i}}^\dagger$ to register $i$, where for any $n \in \mathbb{N}$,
    \[ \mathrm{QFT}^\dagger_{G_{n}}\ket{\bm{x}} = \frac{1}{\sqrt{2^{n}}}\sum_{k \in G_{n}} e^{-2\pi i 2^{n} xk}\ket{k} \]
    for all $x \in G_{n}$.
    \State Measure in the computational basis; interpret the outcome as a vector $\bm{k} \in \mathcal{G}$.
  \end{algorithmic}
\end{algorithm}

\begin{lemma} \label{lem20}
  Let $N_i \coloneqq 2^{n_i}$ for all $i \in [M]$. Let $\bm{g} \in \mathbb{R}^M$ and $a,b,c \in \mathbb{R}$. If $h: \mathcal{G} \to \mathbb{R}$ is such that
  \[ |h(\bm{x}) - \bm{g}\cdot \bm{x} - c| \leq \frac{1}{a\pi S} \]
  for all but a $1/b$ fraction of the points $\bm{x} \in \mathcal{G}$, then the output $\bm{k}$ of Algorithm~\ref{alg1} satisfies
  \[ \Pr\left[\left|g_i -\frac{N_i}{S}k_i \right| \leq \frac{4}{S}\right] \geq \frac{2}{3} \]
  for each $i \in [M]$, provided that $N_i > 2 N|g_i|$ and $1/a^2 + 1/b \leq 1/2304$.
\end{lemma}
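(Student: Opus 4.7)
My plan is to reduce to the case where $h$ is exactly affine on the grid $\mathcal{G}$ and then absorb the deviation by a perturbation argument. The overall structure mirrors the standard analysis of Jordan's algorithm; the new ingredient is the observation that the variable grid sizes $n_i$ can be handled cleanly because, in the ideal affine case, the post-phase-oracle state factorises into a tensor product across the $M$ registers, so the different $n_i$'s never couple in the analysis.

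\textbf{Ideal affine case.} First I would assume $h(\bm{x}) = \bm{g}\cdot \bm{x} + c$ identically on $\mathcal{G}$. After Step 3, the state is
\[
e^{2\pi i S c}\,\bigotimes_{i=1}^M \frac{1}{\sqrt{N_i}}\sum_{x_i \in G_{n_i}} e^{2\pi i S g_i x_i}\ket{x_i},
\]
so the inverse QFTs in Step 4 act independently and the marginal distribution of each $k_i$ is a standard one-dimensional Jordan/phase-estimation distribution for the ``frequency'' $S g_i$ on a grid of size $N_i$. The textbook Dirichlet-kernel tail bound (cf.\ Theorem~21 of~\cite{Gilyen2017-gk} or the phase-estimation analysis in~\cite{NCbook}) gives $\Pr[|g_i - N_i k_i/S| \le 4/S] \ge 8/\pi^2$, provided there is no aliasing---which is precisely the role of the hypothesis $N_i > 2S|g_i|$ (we read the statement $N_i > 2N|g_i|$ as a typo for this Nyquist condition that $S g_i$ lie strictly within the Fourier window $(-N_i/2,\, N_i/2)$).

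\textbf{Reduction via perturbation.} Next, for general $h$, I would compare the pre-measurement states $\ket{\psi_h}$ and $\ket{\psi_\ell}$ produced by the algorithm when the phase oracle implements $h$ and the affine surrogate $\ell(\bm{x}) \coloneqq \bm{g}\cdot\bm{x}+c$, respectively. Both are uniform superpositions over $\mathcal{G}$ differing only in per-basis-state phases, so
\[
\langle \psi_\ell | \psi_h \rangle = \frac{1}{|\mathcal{G}|}\sum_{\bm{x} \in \mathcal{G}} e^{2\pi i S (h(\bm{x}) - \ell(\bm{x}))}.
\]
On the $(1-1/b)$-fraction of good points the exponent has magnitude at most $2/a$, contributing at least $\cos(2/a)(1-1/b)$ to the real part; the bad fraction contributes at least $-1/b$. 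Hence $|\langle \psi_\ell | \psi_h \rangle| \ge \cos(2/a)(1-1/b) - 1/b$, and the pure-state trace distance between them is at most $\sqrt{1 - |\langle \psi_\ell | \psi_h \rangle|^2}$. Since measurement probabilities differ by at most the trace distance, combining with the ideal-case bound gives a success probability on $h$ of at least $8/\pi^2$ minus this trace distance.

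\textbf{Main obstacle.} The remaining obstacle is the quantitative accounting that pins the threshold at $1/a^2 + 1/b \le 1/2304$: expanding $\cos(2/a) \ge 1 - 2/a^2$ and combining terms yields $1 - |\langle \psi_\ell | \psi_h \rangle|^2 = O(1/a^2 + 1/b)$, and the stated constant buys exactly enough slack that $8/\pi^2$ minus the resulting $O(\sqrt{1/a^2 + 1/b})$ trace distance still exceeds $2/3$. Everything else---the tensor-product decoupling in the affine case, the one-dimensional tail bound per register, and the overlap-based perturbation argument---is conceptually standard, so the proof amounts to careful bookkeeping of these constants on top of the generalised setup.
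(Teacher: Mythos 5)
Your proposal is correct and follows essentially the same route as the paper's (and Gily\'{e}n \emph{et al.}'s) proof: analyze the exactly-affine case via the tensor-product structure and the standard phase-estimation tail bound, then absorb the deviation of $h$ from affine by bounding the distance between the actual and ideal pre-measurement states, with the non-aliasing condition $N_i > 2S|g_i|$ (you correctly read $N$ as a typo for $S$). The only cosmetic difference is that you bound the overlap and use trace distance where the reference argument bounds the Euclidean norm of the state difference directly; the constant accounting works out either way.
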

\begin{proof}[Proof sketch] This lemma is an extension of Lemma~20 of \citen{Gilyen2017-gk} (modified to remove the assumption that $\|\bm{g}\|_{\infty} \leq 1/3$), and can be proven by straightforwardly adapting the proof therein, so we highlight the main differences. The ``ideal'' state after the inverse Fourier transforms in Step~4 of Algorithm~\ref{alg1} is (up to a global phase)
  \[ \bigotimes_{i \in [M]} \left[\frac{1}{N_i}\sum_{x_i,k_i \in G_{n_i}} e^{2\pi i N_i x_i \left( \frac{S}{N_i}g_i - k_i\right)}\ket{k_i} \right], \]
  so the analysis of phase estimation in~\cite{NCbook} shows that for any $\kappa > 1$, the output $\bm{k}$ satisfies
  \begin{equation} \label{qpe_analysis} \Pr\left[\left|\frac{S}{N_i}g_i - k_i \right| > \frac{\kappa}{S} \right] \leq \frac{1}{2(\kappa - 2)}. \end{equation}
  The rest of the proof of Lemma~20 in \citen{Gilyen2017-gk} follows through
  with simple modifications. (One minor technical point is that in
  \citen{Gilyen2017-gk}, the bound in SI Eq.~\eqref{qpe_analysis} was mistakenly
  stated with $1/[2(\kappa - 1)]$ on the right-hand side, but this does not take
  into account fixed-point approximation error. For this reason, in
  \citen{Gilyen2017-gk} the parameters $a$ and $b$ are set to $42$ and $1000$,
  respectively, but with the corrected bound we find that $a$ and $b$ need to
  satisfy $1/a^2 + a/b \leq 1/(24)^2/4 = 1/2304$.) Finally, note from the standard
  analysis of phase estimation that the measurement outcome can be unambiguously
  interpreted to determine the (approximate) value of $g_i$ provided that the
  range of $Sg_i/N_i$ is less than $1$. For this, it suffices to have $S|g_i|/ N_i
    < 1/2$, i.e., $N_i > 2S|g_i|$.
\end{proof}

We can now derive the analogue of Theorem~21 in \citen{Gilyen2017-gk}.

\begin{lemma} \label{lem21}
  Let $\bm{g},\bm{y} \in \mathbb{R}^M$,  $c \in \mathbb{R}$, and $r, \delta, \varepsilon \in \mathbb{R}_+$. Suppose that $\widetilde{f}: r \cdot \mathcal{G} \to \mathbb{R}$ is such that
  \begin{equation} \label{fcondition} |\widetilde{f}(r\bm{x} + \bm{y}) - r\bm{x} \cdot \bm{g} - c| \leq \frac{\varepsilon r}{8 a\pi} \end{equation}
  for all but a fraction $1/b$ of the points $\bm{x} \in \mathcal{G}$, with $1/a^2 + 1/b \leq 1/2304$, and that for all $i \in [M]$, we have $|g_i| \leq z_i$ for some $z_i \in \mathbb{R}_+$. Assume access to a phase oracle $O_{\widetilde{f}} : \ket{\bm{x}} \mapsto e^{i \widetilde{f}(r\bm{x} + \bm{y})}\ket{\bm{x}}$. Then, we can compute a $\widetilde{\bm{g}}$ such that
  \begin{equation} \label{gtilde} \Pr\left[\|\bm{g} - \widetilde{\bm{g}}\|_{\infty} \leq \varepsilon\right] \geq 1-\delta\end{equation}
  using $\mathcal{O}(\log(M/\delta))$ queries to $O_{\widetilde{f}}^{2\pi S}$ with $S = 4/(\varepsilon r)$, $\mathcal{O}\left(\log(M/\delta)\sum_{i \in [M]} \log(z_i/\varepsilon)\log\log(z_i/\varepsilon) \right)$ gates, and $\sum_{i \in [M]} \lceil \log(12 z_i/\varepsilon)\rceil$ qubits.
\end{lemma}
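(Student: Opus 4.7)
The plan is to invoke Algorithm~\ref{alg1} on the function $h(\bm{x}) \defeq \widetilde{f}(r\bm{x} + \bm{y})$ defined on $\mathcal{G}$, whose phase oracle is exactly $O_{\widetilde{f}}$, and then boost the single-shot success probability from $2/3$ to $1 - \delta$ by a component-wise median trick. This reduces SI Lemma~\ref{lem21} almost mechanically to SI Lemma~\ref{lem20} combined with a Chernoff-plus-union-bound argument; the real work lies in choosing parameters so that the hypotheses of SI Lemma~\ref{lem20} hold with $\bm{g}' = r\bm{g}$ and $c' = c$.

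For that parameter selection, setting $S = 4/(\eps r)$ converts the assumed error bound $\eps r/(8 a \pi)$ on $\widetilde{f}$ into something strictly below $1/(a \pi S)$, so the near-linearity hypothesis of SI Lemma~\ref{lem20} is inherited on the same ``all but a $1/b$ fraction'' of grid points, and the constant constraint $1/a^2 + 1/b \leq 1/2304$ transfers verbatim. The per-component grid resolution is then chosen as $n_i = \lceil \log(12 z_i/\eps) \rceil$, which ensures $N_i = 2^{n_i} > 8 z_i/\eps \geq 2 S r |g_i| = 2 S |g'_i|$ and so satisfies the non-aliasing condition required for phase estimation to unambiguously recover each component. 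SI Lemma~\ref{lem20} then yields, for each $i$ independently, an outcome $k_i$ such that the rescaled estimate $\widetilde{g}_i \defeq N_i k_i/(rS)$ obeys $|g_i - \widetilde{g}_i| \leq 4/(rS) = \eps$ with probability at least $2/3$.

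To lift per-component $2/3$-success to a joint infinity-norm guarantee at confidence $1-\delta$, I would run $K = \mathcal{O}(\log(M/\delta))$ independent copies of the above procedure and return the component-wise median of the $K$ resulting $M$-vectors. A standard Chernoff bound on majority voting drives the per-component failure probability below $\delta/M$, and a union bound over $i \in [M]$ yields the claimed infinity-norm bound. The query count is then $K$ applications of $O_{\widetilde{f}}^{2\pi S}$; the per-run gate cost is dominated by $M$ approximate inverse quantum Fourier transforms of sizes $n_i$, each implementable in $\mathcal{O}(n_i \log n_i)$ gates using $\mathcal{O}(\log \log n_i)$ ancillas, which matches the stated totals after substituting $n_i = \mathcal{O}(\log(z_i/\eps))$ and multiplying by $K$; the qubit count is $\sum_i n_i$ plus the logarithmic ancilla overhead from the approximate QFT.

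I expect the main obstacle to be careful bookkeeping rather than a new conceptual ingredient. The two places I would scrutinize most are (i) confirming that replacing the uniform grid $G_n^M$ of \citen{Gilyen2017-gk} by the hyper-rectangular grid $\mathcal{G} = G_{n_1} \times \cdots \times G_{n_M}$ does not distort the ``all but a $1/b$ fraction'' measure-theoretic hypothesis of SI Lemma~\ref{lem20}, since the product measure on $\mathcal{G}$ differs from that on a uniform grid of matching total size, and (ii) tracking the numerical constants through the rescaling $\bm{g} \mapsto r\bm{g}$ using the corrected phase-estimation bound flagged in the proof sketch of SI Lemma~\ref{lem20}, so that the constraint $1/a^2 + 1/b \leq 1/2304$ --- rather than the looser one from the original statement in \citen{Gilyen2017-gk} --- is what ultimately controls the choice of $a$ and $b$.
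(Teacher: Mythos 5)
Your proposal is correct and follows essentially the same route as the paper's proof: the same substitution $h(\bm{x}) = \widetilde{f}(r\bm{x}+\bm{y})$ with $\bm{g}' = r\bm{g}$, the same parameter choices $S = 4/(\varepsilon r)$ and $n_i = \lceil \log(12 z_i/\varepsilon)\rceil$ to satisfy the hypotheses of SI Lemma~\ref{lem20} (including the non-aliasing condition $N_i > 2S|g_i'|$), and the same classical median-of-$\mathcal{O}(\log(M/\delta))$-repetitions boost, with the gate count dominated by the inverse Fourier transforms. Your two scrutiny points are reasonable but not obstacles, since SI Lemma~\ref{lem20} is already stated directly on the hyper-rectangular grid $\mathcal{G}$ and with the corrected constant constraint.
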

\begin{proof}[Proof sketch]
  We set $n_i = \lceil \log(12 z_i/\varepsilon)\rceil$ for each $i$, $S = 4/(\varepsilon r)$, and $h(\bm{x}) = \widetilde{f}(r\bm{x}+ \bm{y})$ in Algorithm~\ref{alg1}. Then, \[ |h(\bm{x}) - \bm{x} \cdot r \bm{g}| = | \widetilde{f}(r\bm{x} + \bm{y}) - r\bm{x}\cdot \bm{g}| < \frac{\varepsilon r}{8a\pi} \leq \frac{1}{a\pi S}, \]
  and the other conditions of SI Lemma~\ref{lem20} are satisfied as well, so the output $\bm{k}$ of Algorithm~\ref{alg1} satisfies
  \[ \left|g_i - \frac{1}{r}\frac{N_i}{S} k_i\right| \leq \frac{4}{rS} =
    \varepsilon\] with probability at least $2/3$, for each $i$.
  Therefore, by repeating Algorithm~\ref{alg1} $\mathcal{O}(\log(M/\delta))$ times and
  taking the median~\footnote{In \citen{Gilyen2017-gk}, the median is computed using a quantum circuit. For our purposes, it suffices to repeat the circuit $\mathcal{O}(\log(M/\delta))$ times, measure at the end of Algorithm~\ref{alg1}, and take the median of the measurement outcomes classically.} of the outputs gives a $\widetilde{\bm{g}}$ satisfying
  SI Eq.~\eqref{gtilde}.
  The gate complexity is dominated by that of the inverse Fourier transforms.
\end{proof}

\subsection{Condition on the higher-order derivatives} With SI Lemma~\ref{lem21} in hand, it remains to find a function $\widetilde{f}$ satisfying SI Eq.~\eqref{fcondition}  where $\bm{g}$ the gradient of the function $f$ that we are interested in. To satisfy the analogous condition in their setting, \citen{Gilyen2017-gk} takes $\widetilde{f}$ to be a degree-$2m$ central difference formula. They then make the assumption that for all $k \in \mathbb{N}$, the $k$th derivatives of $f$ satisfy  $|\partial_\alpha f(\bm{0})| \leq c^k k^{k/2}$, and show that this implies $|f_{(2m)}(\bm{y}) - \bm{y}\cdot \nabla f(\bm{0})| \leq \sum_{k=2m+1}^\infty (8rcm \sqrt{M})^k$ for all but a fraction $1/b$ of $\bm{y} \in r \cdot G_n^d$ (for a $b$ such that $1/b < 1/2304$). They then choose $r$ such that $r^{-1} =\mathcal{O}(cm\sqrt{M}(acm\sqrt{M}/\varepsilon)^{1/(2m)})$ upper-bound the right-hand side by $\varepsilon r/(8a\pi)$.

Of course, the $|\partial_\alpha f(\bm{0})| \leq c^{k} k^{k/2}$ assumption is not suitable for our purposes, where the gradient components are allowed to have different magnitudes. Motivated by the function $f$ constructed for the expectation value estimation algorithm (SI Eq.~\eqref{specialf}), we instead make the following assumption: there exists some $\bm{z} \in \mathbb{R}^M$ such that for all $k \in \mathbb{N}$,
\begin{equation} \label{derivatives}
  |\partial_\alpha f(\bm{0})| \leq \bm{z}^\alpha
\end{equation}
for all $\alpha \in [M]^k$ (where for $\alpha = (\alpha_1, \dots, \alpha_k) \in [M]^k$, we have $\partial_{\bm{\alpha}} f \equiv \partial_{\alpha_1}\dots \partial_{\alpha_k} f$ and $\bm{z}^{{\bm{\alpha}}} \equiv z_{\alpha_1}\dots z_{\alpha_k}$). We now sketch the proof that SI Eq.~\eqref{derivatives} implies that $|f_{(2m)}(\bm{y}) - \bm{y} \cdot \nabla f(\bm{0})| \leq \sum_{k = 2m+1}^\infty (8\|\bm{z}\|rm/\sqrt{k})^k$, and the rest of the analysis of \citen{Gilyen2017-gk} essentially follows through with $c\sqrt{M}$ replaced by the 2-norm $\|\bm{z}\|$ of $\bm{z}$, and $m$ replaced by $\sqrt{m}$.

\begin{lemma} \label{lem24} Let $r \in \mathbb{R}_+$. Suppose that $f: \mathbb{R}^M \to \mathbb{R}$ is analytic and that there exists a $\bm{z} \in \mathbb{R}^M$ such that for all $k \in \mathbb{N}$ and ${\bm{\alpha}} \in [M]^k$, we have $|\partial_{{\bm{\alpha}}} f(\bm{0})| \leq \bm{z}^{{\bm{\alpha}}}$. Then \[ |f_{(2m)}(\bm{y}) - \bm{y} \cdot \nabla f(\bm{0})| \leq \sum_{k=2m+1}^\infty \left(\frac{8\|\bm{z}\|rm}{\sqrt{k}}\right)^k \] for all but a $1/3840$ fraction of points $\bm{y} \in r\cdot \mathcal{G}$.
\end{lemma}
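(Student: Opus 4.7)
The plan is to follow the structure of Gily\'en~\textit{et al.}'s proof of Lemma~24 in \citen{Gilyen2017-gk}, modifying the one step where the derivative hypothesis enters so as to exploit our anisotropic bound $|\partial_{\bm{\alpha}} f(\bm{0})|\leq \bm{z}^{\bm{\alpha}}$. First I would introduce the single-variable function $\Phi(t)\coloneqq f(t\bm{y})$, which is analytic with $\Phi^{(k)}(0)=(\bm{y}\cdot\nabla)^k f(\bm{0})$. The degree-$2m$ central-difference formula takes the form $f_{(2m)}(\bm{y})=\sum_{\ell\in\{\pm 1,\dots,\pm m\}}a_\ell^{(2m)}\Phi(\ell/m)$, with coefficients chosen so that $\sum_\ell a_\ell^{(2m)}(\ell/m)^k=\delta_{k,1}$ for $k\in\{0,1,\dots,2m\}$. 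Substituting the Taylor series of each $\Phi(\ell/m)$ and applying these orthogonality relations leaves the residual
\begin{equation*}
  f_{(2m)}(\bm{y})-\bm{y}\cdot\nabla f(\bm{0})=\sum_{k=2m+1}^\infty\frac{(\bm{y}\cdot\nabla)^k f(\bm{0})}{k!}\sum_\ell a_\ell^{(2m)}(\ell/m)^k,
\end{equation*}
which is the same starting point as in \citen{Gilyen2017-gk}.

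The new ingredient is the bound on $|(\bm{y}\cdot\nabla)^k f(\bm{0})|$. Expanding $(\bm{y}\cdot\nabla)^k$ as a sum over ordered tuples $\bm{\alpha}\in[M]^k$ and applying the hypothesis term-by-term,
\begin{equation*}
  |(\bm{y}\cdot\nabla)^k f(\bm{0})|\;\leq\;\sum_{\bm{\alpha}\in[M]^k}|y^{\bm{\alpha}}|\,\bm{z}^{\bm{\alpha}}\;=\;\Big(\sum_i|y_i|\,z_i\Big)^k\;\leq\;\bigl(\|\bm{y}\|_2\,\|\bm{z}\|_2\bigr)^k,
\end{equation*}
the final step being Cauchy--Schwarz. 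This replaces the analogous bound $c^k k^{k/2}(\sum_i|y_i|)^k$ used by Gily\'en~\textit{et al.}, with $c\sqrt{M}$ supplanted by $\|\bm{z}\|_2$; it is the sole place where our assumption enters, and it produces exactly the structure needed to make the final complexity scale with the $2$-norm of the observable-norm vector.

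The remaining task is to control $\|\bm{y}\|_2$ probabilistically and to combine all of the pieces. Since each coordinate $y_i/r$ is independent and uniform on $G_{n_i}\subset[-1/2,1/2]$ with variance at most $1/12$, $\mathbb{E}_{\bm{y}\in r\cdot\mathcal{G}}[\|\bm{y}\|_2^2]\leq r^2 M/12$, so by Markov's inequality $\|\bm{y}\|_2\leq\mathcal{O}(r\sqrt{M})$ for all but a $1/3840$ fraction of the grid points. Plugging this bound into the residual expression, invoking Lemma~23 of \citen{Gilyen2017-gk} to bound the central-difference coefficient tails $|\sum_\ell a_\ell^{(2m)}(\ell/m)^k|$ for $k>2m$, and applying Stirling's approximation $k!\geq(k/e)^k$ to combine the factorial with the $k$-th powers assembles the announced bound $\sum_{k=2m+1}^\infty(8\|\bm{z}\|\,rm/\sqrt{k})^k$.

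The main obstacle will be the careful bookkeeping of constants and powers of $m$, $k$, and $\sqrt{M}$ when importing Lemma~23 of \citen{Gilyen2017-gk}: the $m$ in the summand comes from the structure of the CDF coefficients, while the $\sqrt{M}$ emerging from the concentration of $\|\bm{y}\|_2$ must be absorbed into the overall scale parameter $r$ for the announced form to hold. The paper's post-lemma remark that ``$c\sqrt{M}$ is replaced by $\|\bm{z}\|$ and $m$ replaced by $\sqrt{m}$'' in the downstream analysis signals precisely that $r$ is taken larger than in \citen{Gilyen2017-gk} by a $\sqrt{m/M}$ factor, which is exactly what one needs to cancel the extra $\sqrt{M}$ from Markov and to make the sum converge with the choice $m=\Theta(\log(\|\bm{z}\|/\varepsilon))$. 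Once this substitution is tracked, the rest of the gradient-algorithm analysis carries over verbatim.
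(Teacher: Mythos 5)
There is a genuine gap at the heart of your argument: the step where you bound $|(\bm{y}\cdot\nabla)^k f(\bm{0})| \leq \big(\sum_i |y_i| z_i\big)^k \leq (\|\bm{y}\|_2\|\bm{z}\|_2)^k$. Because you take absolute values term by term, this bound is essentially tight for typical grid points (when the $z_i$ are comparable, $\sum_i|y_i|z_i = \Theta(r\|\bm{z}\|_1) = \Theta(r\sqrt{M}\,\|\bm{z}\|_2)$), and your subsequent ``concentration of $\|\bm{y}\|_2$'' step is vacuous since $\|\bm{y}\|_2 \leq r\sqrt{M}/2$ already holds deterministically on $r\cdot\mathcal{G}$. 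The resulting summand is of order $(\|\bm{z}\|r\sqrt{M}\,m/k)^k$, which exceeds the claimed $(8\|\bm{z}\|rm/\sqrt{k})^k$ by a factor of $(\Theta(\sqrt{M/k}))^k$. This factor cannot be ``absorbed into $r$'': with the choice $r^{-1}=\Theta(\|\bm{z}\|\sqrt{m})$ used downstream, your bound for $k=2m+1$ is of order $(\sqrt{M/m})^{2m+1}\gg 1$, so the sum is not small when $M\gg m$; and shrinking $r$ by $\sqrt{M/m}$ inflates $S=4/(\varepsilon r)$ and hence the query count to $\widetilde{\mathcal{O}}(\|\bm{z}\|\sqrt{M}/\varepsilon)$---exactly the suboptimal scaling the lemma exists to avoid (for $B_j\equiv 1$ this gives $M/\varepsilon$ instead of $\sqrt{M}/\varepsilon$).

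The missing idea is that the randomness must be applied to the \emph{signed} sum $\sum_i y_i z_i$, not to $\|\bm{y}\|_2$. The paper's proof bounds the second moment $\E[(\sum_{\bm{\alpha}}\bm{y}^{\bm{\alpha}}\partial_{\bm{\alpha}}f(\bm{0}))^2] \leq \E[(\sum_i y_i z_i)^{2k}]$, using that $\E[\bm{y}^{\bm{\alpha}}\bm{y}^{\bm{\beta}}]\geq 0$ for independent symmetric $y_i$ so that the derivative products may be replaced by $\bm{z}^{\bm{\alpha}}\bm{z}^{\bm{\beta}}$ term by term, and then controls $\E[(\sum_i y_i z_i)^{2k}]$ by $2(\|\bm{z}\|^2/2)^k k!$ via Hoeffding's inequality applied to the bounded mean-zero summands $y_iz_i\in[-z_i/2,z_i/2]$. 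The sign cancellations make the typical size of $\sum_i y_iz_i$ of order $\|\bm{z}\|_2$ rather than $\|\bm{z}\|_1$. Markov's inequality then yields $|\sum_{\bm{\alpha}}\bm{y}^{\bm{\alpha}}\partial_{\bm{\alpha}}f(\bm{0})|\leq\sqrt{2}(4\|\bm{z}\|\sqrt{k/2})^k$ for all but a $4^{-2k}$ fraction of points, and a union bound over $k\geq 2m+1$ produces the $1/3840$. Your outline of the central-difference decomposition and the use of Lemma~23 of Gily\'en \emph{et al.} is fine, but without this moment/Hoeffding argument the stated inequality does not follow.
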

\begin{proof}[Proof outline.]
  The key step is to modify Proposition~13 of \citen{Gilyen2017-gk} to apply to the derivative condition SI Eq.~\eqref{derivatives}. In particular, consider drawing $\bm{y} \in \mathcal{G}$ uniformly at random. Then, the components $y_1, \dots, y_M$ are i.i.d. symmetric random variables bounded in $[-1/2,1/2]$, and we have
  \begin{align*}
    \E\Bigg[\Bigg(\sum_{{\bm{\alpha}} \in [M]^k} \bm{y}^{{\bm{\alpha}}} \partial_{{\bm{\alpha}}} f(\bm{0}) \Bigg)^2 \Bigg] & = \sum_{{\bm{\alpha}},{\bm{\beta}} \in [M]^k} \E[\bm{y}^{\bm{\alpha}} \bm{y}^{\bm{\beta}}] \partial_{\bm{\alpha}} f(\bm{0})\partial_{\bm{\beta}} f(\bm{x}) \\
                                                                                                                           & \leq \sum_{{\bm{\alpha}},{\bm{\beta}} \in [M]^k} \E[\bm{y}^{\bm{\alpha}} \bm{y}^{\bm{\beta}}] \bm{z}^{\bm{\alpha}} \bm{z}^{\bm{\beta}}                     \\
                                                                                                                           & = \E\Bigg[\Bigg(\sum_{i \in [M]} y_i z_i\Bigg)^{2k} \Bigg]                                                                                                 \\
                                                                                                                           & = \int_0^\infty dt\, \Pr \Bigg[\Bigg|\sum_{i \in [M]}y_i z_i \Bigg| \geq t^{1/(2k)} \Bigg]                                                                 \\
                                                                                                                           & \leq \int_0^\infty dt\, 2\exp\left(-\frac{2t^{1/k}}{\sum_{i \in [M]} z_i^2} \right)                                                                        \\
                                                                                                                           & = 2 \left(\frac{\|\bm{z}\|^2}{2}\right)^k k!                                                                                                               \\
                                                                                                                           & < 2\left( \frac{\|\bm{z}\|^2 k}{2}\right)^k,
  \end{align*}
  where the second inequality follows from Hoeffding's inequality, using the fact that $y_i z_i \in [-z_i/2, z_i/2]$. Hence, by Markov's inequality, we have
  \begin{equation} \label{newbound} \Bigg|\sum_{{\bm{\alpha}} \in [M]^k} \bm{y}^{{\bm{\alpha}}} \partial_{\bm{\alpha}} f(\bm{0}) \Bigg| \leq \sqrt{2}(4\|\bm{z}\|\sqrt{k/2})^k \end{equation}
  for all but at most a $1/4^{2k}$ fraction of points $\bm{y} \in \mathcal{G}$. Now, comparing SI Eq.~\eqref{newbound} to Equation~52 of \citen{Gilyen2017-gk}, we see that the rest of the proof of Theorem~24 in \citen{Gilyen2017-gk} goes through with $c \sqrt{M}$ replaced by $\|\bm{z}\|/\sqrt{k}$, leading to the claimed result.
\end{proof}
The remaining step is to choose $r$ so that $f_{(2m)}$ satisfies the conditions required of $\widetilde{f}$ in SI Lemma~\ref{lem21}. This gives our analogue of \citen{Gilyen2017-gk} Theorem~25 (see also Theorem~4 in the main text).
\begin{theorem} \label{thm:gradient_general}
  Let $\bm{x} \in \mathbb{R}^M$. Suppose that $f: \mathbb{R}^M \to \mathbb{R}$ is analytic and that there exists a $\bm{z} \in \mathbb{R}^M$ such that for all $k \in \mathbb{N}$, ${\bm{\alpha}} \in [M]^k$, we have $|\partial_{\bm{\alpha}} f(\bm{x})| \leq \bm{z}^{\bm{\alpha}}$. Then, for any $0 < \varepsilon < \|\bm{z}\|$, we can compute a $\widetilde{\bm{g}}$ such that $\Pr[|\nabla f(\bm{0}) - \widetilde{\bm{g}}\|_{\infty} \leq \eps] \geq 1- \delta$ using
  \[ \mathcal{O}\left(\left[\frac{\|\bm{z}\|}{\eps} \sqrt{\log\frac{\|\bm{z}\|}{\eps}} \log\log \frac{\|\bm{z}\|}{\eps}\right]\log\frac{d}{\delta} \right) \]
  (fractional) queries to a phase oracle $O_f$ for $f$, $\mathcal{O}\left(\log(M/\delta)\sum\limits_{i\in[M]}\log(z_i/\eps)\log\log(z_i/\eps)\right)$ gates, and $\mathcal{O}\left(\sum_{i \in [M]}\log(z_i/\varepsilon)\right)$ qubits.
\end{theorem}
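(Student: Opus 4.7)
The plan is to mimic the proof of Theorem~25 of \citen{Gilyen2017-gk}, but swap their isotropic derivative bound $|\partial_{\bm{\alpha}} f| \leq c^k k^{k/2}$ for the anisotropic assumption $|\partial_{\bm{\alpha}} f(\bm{0})| \leq \bm{z}^{\bm{\alpha}}$, carefully tracking how this changes the parameters. Concretely, I would take the degree-$2m$ central difference approximation $\widetilde{f} = f_{(2m)}$ of $f$ as the approximating function supplied to SI Lemma~\ref{lem21}. SI Lemma~\ref{lem24} already provides the key bound $|f_{(2m)}(\bm{y}) - \bm{y} \cdot \nabla f(\bm{0})| \leq \sum_{k=2m+1}^{\infty} (8\|\bm{z}\|rm/\sqrt{k})^k$ on all but a $1/3840$ fraction of points $\bm{y} \in r\cdot \mathcal{G}$, which is precisely the anisotropic analogue of Equation~52 of \citen{Gilyen2017-gk}.

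Next, the goal is to choose $m$ and $r$ so that this tail is at most $\varepsilon r/(8a\pi)$, meeting the hypothesis of SI Lemma~\ref{lem21}. Setting $m = \lceil \log(\|\bm{z}\|/\varepsilon) \rceil$ and requiring $8\|\bm{z}\|rm/\sqrt{2m+1} \leq 1/2$ makes the tail a geometric sum dominated by its first term, and a standard calculation (identical to the one in \citen{Gilyen2017-gk}, but with $c\sqrt{M}$ replaced by $\|\bm{z}\|$ and an extra $1/\sqrt{k}$ absorbed appropriately) shows that the required inequality holds for $r^{-1} = \Theta\bigl(\|\bm{z}\|\sqrt{m}\,(a\|\bm{z}\|\sqrt{m}/\varepsilon)^{1/(2m)}\bigr)$. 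Because $m = \Theta(\log(\|\bm{z}\|/\varepsilon))$, the $1/(2m)$th power factor is a constant, so $r^{-1} = \mathcal{O}(\|\bm{z}\|\sqrt{\log(\|\bm{z}\|/\varepsilon)})$ and $S = 4/(\varepsilon r) = \mathcal{O}((\|\bm{z}\|/\varepsilon)\sqrt{\log(\|\bm{z}\|/\varepsilon)})$. The remaining hypothesis $|g_i| \leq z_i$ of SI Lemma~\ref{lem21} follows immediately from the $k=1$ case of the derivative assumption.

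Invoking SI Lemma~\ref{lem21} with these choices directly yields the claimed gate and qubit counts, as well as $\mathcal{O}(\log(M/\delta))$ queries to the fractional phase oracle $O_{\widetilde{f}}^{2\pi S}$. The final step is to realize this oracle from the given phase oracle $O_f$ for $f$. Since $f_{(2m)}$ is an explicit linear combination of $2m+1$ evaluations of $f$ at shifted arguments, the standard fractional-query and linear-combination-of-phase-oracles machinery of \citen{Gilyen2017-gk} implements $O_{\widetilde{f}}^{2\pi S}$ using $\mathcal{O}(S \log\log(\|\bm{z}\|/\varepsilon))$ fractional queries to $O_f$, which produces the extra $\log\log$ factor in the final query bound. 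The main obstacle is the parameter-choice argument in the second paragraph: one must verify that, with the stated $m$ and $r$, the entire series $\sum_{k\geq 2m+1}(8\|\bm{z}\|rm/\sqrt{k})^k$ is truly $\mathcal{O}(\varepsilon r)$ rather than just its leading term, and that $a$ and $b$ can simultaneously be chosen to satisfy $1/a^2 + 1/b \leq 1/2304$ with the $1/3840$ bad-point fraction from SI Lemma~\ref{lem24}. Once that is done, everything else is a direct plug-in into the earlier lemmas.
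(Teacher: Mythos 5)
Your proposal is correct and follows essentially the same route as the paper: take $\widetilde{f}=f_{(2m)}$, use SI Lemma~\ref{lem24} to bound $|f_{(2m)}(\bm{y})-\bm{y}\cdot\nabla f(\bm{0})|$, choose $m=\lceil\log(\|\bm{z}\|/\varepsilon)\rceil$ and $r^{-1}=\Theta\bigl(\|\bm{z}\|\sqrt{m}\,(a\|\bm{z}\|\sqrt{m}/\varepsilon)^{1/(2m)}\bigr)$ (matching the paper's explicit $r^{-1}=9\|\bm{z}\|\sqrt{m/2}(64\cdot 8a\pi\|\bm{z}\|\sqrt{m/2}/\varepsilon)^{1/(2m)}$ up to constants), feed the result into SI Lemma~\ref{lem21}, and convert to queries on $O_f$ via the central-difference/fractional-query machinery. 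The verification points you flag (the full geometric tail and the compatibility of $a$, $b$ with the $1/3840$ fraction) are exactly the parts the paper also defers to the analysis of \citen{Gilyen2017-gk}.
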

\begin{proof}[Proof sketch]
  This can be proven using the same arguments as in the proof of Theorem~25 in \citen{Gilyen2017-gk}. The main difference is that we have a different bound on $|f_{(2m)}(\bm{y}) - \bm{y}\cdot \nabla f(\bm{0})|$ from SI Lemma~\ref{lem24}. Consequently, instead of setting $r$ as in \citen{Gilyen2017-gk}, we choose $r$ so that
  \[ r^{-1} = 9\|\bm{z}\| \sqrt{m/2} (64 \cdot 8a\pi \|\bm{z}\| \sqrt{m/2}/\varepsilon)^{1/(2m)}. \]
  Proceeding through the rest of the proof in \citen{Gilyen2017-gk} with the appropriate modifications leads to the stated query complexity. The gate and qubit complexities follow directly from Lemma~\ref{lem21}, observing that the assumption on the derivatives implies that $\partial_i f(\bm{x}) \leq z_i$ for every $i \in [M]$.
\end{proof}

We can now prove our general theorem, SI Theorem~\ref{thm:ub_general}, for estimating the expectation values of arbitrary observables, by showing that the particular function $f$ in SI Eq.~\eqref{specialf} whose gradient we are interested in satisfies the condition of SI Theorem~\ref{thm:gradient_general}.
\begin{proof}[Proof of SI Theorem~\ref{thm:ub_general} (sketch)]
  Let $f$ be the function defined in SI Eq.~\eqref{specialf}. As shown in the main text, the components of $\nabla f(\bm{0})$ are exactly the expectation values $\bra{\psi}O_j\ket{\psi}$, and a probability oracle for $f$ can be constructed using one query to each of $U_\psi$ and $U_\psi^\dagger$. Note that for any $k \in \mathbb{N}$ and ${\bm{\alpha}} \in [M]^k$, we have
  \[ |\partial_{\bm{\alpha}} f(\bm{0})| = \Big|-\frac{1}{2} \mathrm{Im}\Big[\bra{\psi} \prod_{i \in [k]} (-2O_{\alpha_i})\ket{\psi} \Big] \Big| \leq \prod_{i \in [k]} (2 \|O_{\alpha_i}\|) \leq \prod_{i \in [k]} (2B_{\alpha_i}) \]
  (where the order of the $O_{\alpha_i}$'s in the second expression depends on the values of the $\alpha_i$'s), so we can take $\bm{z} = 2(B_1,\dots, B_M)$ in SI Theorem~\ref{thm:gradient_general} to obtain the gate and qubit complexities, as well the number of queries $Q$ to a phase oracle for $f$. By Theorem~14 of \citen{Gilyen2017-gk}, an $\varepsilon'$-approximate phase oracle for $f$ can be implemented using $\mathcal{O}(\log(1/\varepsilon'))$ queries to a probability oracle for $f$, $\mathcal{O}(\log(1/\varepsilon')\log\log(1/\varepsilon'))$ gates, and $\mathcal{O}(\log\log(1/\varepsilon'))$ ancillas. Setting $\varepsilon' = x(\varepsilon/Q)$ gives the result.
\end{proof}

\end{document}